\documentclass[11pt,a4paper]{article}

\usepackage[T1]{fontenc}
\usepackage[utf8]{inputenc}
\usepackage{lmodern}
\usepackage[margin=25mm,headheight=15pt,headsep=8mm]{geometry}
\usepackage{amsmath,amssymb,amsthm,mathtools}
\usepackage{microtype}
\usepackage{needspace}
\usepackage{graphicx}
\usepackage{booktabs,array,tabularx}
\usepackage{enumitem}
\usepackage{xcolor}
\usepackage{tikz}
\usetikzlibrary{arrows.meta,positioning}
\usepackage{titlesec}
\usepackage{aliascnt}
\usepackage{fancyhdr}
\usepackage[authoryear,round]{natbib}
\usepackage[hyphens]{url}
\usepackage[
  colorlinks=true,
  linkcolor=black,
  citecolor=black,
  urlcolor=black,
  bookmarksnumbered=true,
  pdfencoding=auto,
  psdextra
]{hyperref}
\usepackage[nameinlink,noabbrev]{cleveref}

\hypersetup{
  pdftitle={Models of Strategic Choice},
  pdfauthor={},
  pdfsubject={Working paper},
  pdfkeywords={game theory, extensive-form games, equilibrium, deliberation, behavioral rules}
}

\newtheorem{theorem}{Theorem}[section]

\newaliascnt{proposition}{theorem}
\newtheorem{proposition}[proposition]{Proposition}
\aliascntresetthe{proposition}

\newaliascnt{lemma}{theorem}
\newtheorem{lemma}[lemma]{Lemma}
\aliascntresetthe{lemma}

\newaliascnt{corollary}{theorem}
\newtheorem{corollary}[corollary]{Corollary}
\aliascntresetthe{corollary}

\theoremstyle{definition}

\newaliascnt{definition}{theorem}
\newtheorem{definition}[definition]{Definition}
\aliascntresetthe{definition}

\newaliascnt{assumption}{theorem}
\newtheorem{assumption}[assumption]{Assumption}
\aliascntresetthe{assumption}

\newaliascnt{example}{theorem}
\newtheorem{example}[example]{Example}
\aliascntresetthe{example}

\theoremstyle{remark}

\newaliascnt{remark}{theorem}
\newtheorem{remark}[remark]{Remark}
\aliascntresetthe{remark}

\numberwithin{equation}{section}
\numberwithin{figure}{section}
\numberwithin{table}{section}

\crefname{section}{section}{sections}
\Crefname{section}{Section}{Sections}
\crefname{assumption}{assumption}{assumptions}
\Crefname{assumption}{Assumption}{Assumptions}

\newcommand{\E}{\mathbb E}
\newcommand{\Prb}{\mathbb P}
\newcommand{\R}{\mathbb R}

\newcommand{\one}{\mathbf 1}

\DeclareMathOperator*{\argmax}{arg\,max}
\DeclareMathOperator*{\argmin}{arg\,min}
\newcommand{\cE}{\mathcal E}
\newcommand{\CA}{C^{A}}
\newcommand{\CR}{C^{R}}
\newcommand{\CRC}{C^{RC}}

\newcommand{\I}{\mathcal I}

\newcommand{\cM}{\mathcal M}

\newcommand{\cZ}{\mathcal Z}

\newcommand{\Safe}{\mathrm{Safe}}
\newcommand{\Risky}{\mathrm{Risky}}
\newcommand{\doi}[1]{\href{https://doi.org/#1}{\nolinkurl{doi:#1}}}

\newenvironment{sectionoverview}
  {\begin{quote}\small\noindent\textbf{Section overview.}\ }
  {\end{quote}\vspace{0.3em}}

\setlist{nosep,leftmargin=1.8em}
\titleformat{\section}
  {\normalfont\Large\bfseries}
  {\thesection}{0.7em}{}

\titleformat{\subsection}
  {\normalfont\large\bfseries}
  {\thesubsection}{0.7em}{}

\titleformat{\subsubsection}
  {\normalfont\normalsize\bfseries}
  {\thesubsubsection}{0.7em}{}

\titlespacing*{\section}{0pt}{2.8ex plus .7ex}{1.1ex}
\titlespacing*{\subsection}{0pt}{2.0ex plus .5ex}{0.8ex}
\titlespacing*{\subsubsection}{0pt}{1.5ex plus .3ex}{0.6ex}

\fancypagestyle{plain}{
  \fancyhf{}
  \fancyfoot[C]{\thepage}
  
}

\title{New Approaches to Strategic Thinking in Dynamic Games\thanks{Department of Political Economy, King's College London, London, UK. mehmet.mars.seven@kcl.ac.uk.}}
\author{Mehmet Mars Seven}
\date{\today}

\begin{document}

\maketitle

\begin{abstract}
We present several models of strategic choice concerning equilibrium reasoning and behaviour, as well as new applications to dynamic games. We propose concepts such as historical equilibrium, simple Nash equilibrium, cautious backward induction, and an approach inspired by how chess players reason in sequential games. \textit{JEL codes:} C72, C73, D03.
\end{abstract}

\noindent\textbf{Keywords:} extensive-form games, solution concepts, imperfect recall, chess, maximin

\tableofcontents
\clearpage

\section{Thinking While the Opponent Thinks}
\label{sec:time}

\subsection{Introduction}\label{time:sec:intro}

In a clocked alternating-move game, deliberation can improve a player's preparation while simultaneously preparing the opponent. This section models that tradeoff in a finite extensive-form game where both players prepare during the mover's thinking time, but only the mover's clock falls. It characterizes a temporary exploitation window in which an attack becomes available before its defense and, when the opponent's reply set expands with deliberation, gives conditions under which the mover can verify the best immediate move before the deadline. The analysis distinguishes equilibrium play from completed deliberation: a move may be optimal even when its optimality has not yet been established by the prescribed analysis.

Concurrent deliberation creates a preparation externality: the mover pays for time that both players can use. In the public-state model, standard backward induction characterizes equilibrium over admissible physical moves and deliberation. Under the stated threshold restrictions, that equilibrium can require movement inside a temporary exploitation window.

The model is related to the literature on bounded reasoning and strategic deliberation \citep{Jehiel1995,LarsonSandholm2001,HalpernPass2015,Shannon1950,Himstedt2005,Orton2021}. Our focus is the timing rule that gives both players preparation while charging only the mover. We take deliberation rules and admissible moves as primitives. Evidence that chess players devote more time to positions where further computation is more valuable motivates endogenous timing \citep{RussekEtAl2025}.

\subsection{The game}\label{time:sec:game}

Let $\Gamma=(H,Z,I(\cdot),(A(h))_{h\notin Z},u)$ be a finite two-player perfect-information game without chance. Here $H$ is the set of physical histories, $Z\subseteq H$ the terminal histories, $I(h)\in I=\{1,2\}$ the player on move, $A(h)$ the finite set of legal physical actions, and $u_i:Z\to\mathbb R$ player $i$'s terminal payoff. Write $ha$ for the physical history after action $a$ at $h$.

Player $i$ has a finite preparation-state space $K_i$ and a clock $c_i\in\{0,\ldots,T_i\}$, where $T_i$ is a nonnegative integer. A state $k_i\in K_i$ records retained analysis and the preparation on which admissible moves depend. Initial preparation states are specified. Three commonly known maps describe the technology:
\begin{enumerate}[label=(\roman*)]
\item $D_i:H\times K_i\to K_i$ is the state update after one unit of deliberation;
\item $\varnothing\ne\cM_i(h,k_i)\subseteq A(h)$ is the set of admissible physical moves when $I(h)=i$;
\item
\[
R_i:\{(k_i,h,a):k_i\in K_i,\ h\in H\setminus Z,\ a\in A(h)\}\to K_i
\]
is the preparation retained after physical action $a$.
\end{enumerate}
The correspondence $\cM_i$ is a restriction on feasible play. An action outside it cannot be selected in the expanded game. The maps may embody a fixed order of analysis or an order that depends on earlier evaluations recorded in $k_i$. Because $D_i$ depends on the physical history $h$, the same update map may advance preparation at different rates depending on whose physical turn it is. Thus the own-turn and background rates used below are special cases of $D_i$. The nonmover's update is automatic.

The public state is $s=(h,c_1,c_2,k_1,k_2)$. Both players observe this full state after every transition. At $i=I(h)$, player $i$ may choose $a\in\cM_i(h,k_i)$, leading to
\begin{equation}\label{time:eq:move}
\mu(s,a)=\bigl(ha,c_1,c_2,R_1(k_1,h,a),R_2(k_2,h,a)\bigr).
\end{equation}
If $c_i>0$, she may instead deliberate for one unit. For player 1,
\begin{equation}\label{time:eq:think1}
\Theta_1(s)=\bigl(h,c_1-1,c_2,D_1(h,k_1),D_2(h,k_2)\bigr),
\end{equation}
and for player 2,
\begin{equation}\label{time:eq:think2}
\Theta_2(s)=\bigl(h,c_1,c_2-1,D_1(h,k_1),D_2(h,k_2)\bigr).
\end{equation}
Physical moves take no clock time. A player with zero clock must select an admissible physical move, but may still prepare during the opponent's deliberation. Payoffs depend only on the terminal physical history. There is no direct cost of thinking beyond the clock constraint and its effect on preparation. Unless stated otherwise, the applications below begin at an initial state with $c_i=T_i$.

A strategy specifies Move or Deliberate, and the physical action when applicable, after every public history of this expanded game. All feasible history-contingent strategies are allowed. The solution concept is subgame perfection. No private preparation states, private computational discoveries, or uncertainty about payoffs are added to this equilibrium model.

\begin{proposition}[Finite representation]\label{time:prop:existence}
The expanded game has a pure subgame-perfect equilibrium. In the zero-sum case, let $V(s)$ be player 1's continuation value. At terminal $h$, $V(s)=u_1(h)$. At player-1 states,
\begin{equation}\label{time:eq:recursion1}
V(s)=\max\left\{\max_{a\in\cM_1(h,k_1)}V(\mu(s,a)),\ V(\Theta_1(s))\right\},
\end{equation}
where the deliberation term is omitted at $c_1=0$. At player-2 states,
\begin{equation}\label{time:eq:recursion2}
V(s)=\min\left\{\min_{a\in\cM_2(h,k_2)}V(\mu(s,a)),\ V(\Theta_2(s))\right\},
\end{equation}
where the deliberation term is omitted at $c_2=0$.
\end{proposition}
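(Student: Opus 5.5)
The plan is to show that the expanded game is itself a finite perfect-information game without chance, and then apply backward induction (Kuhn's theorem). The one nontrivial point is finiteness, and in particular that play terminates.

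\textbf{Step 1: finite state space.} The public state space $S=H\times\{0,\ldots,T_1\}\times\{0,\ldots,T_2\}\times K_1\times K_2$ is finite, since $\Gamma$ is finite and each $K_i$ is finite.

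\textbf{Step 2: termination via a potential.} I will introduce the potential
\[
\phi(s)=\ell(h)+c_1+c_2,
\]
where $\ell(h)$ is the length of the longest physical path from $h$ to $Z$. This is finite because $\Gamma$ is a finite game tree. A physical move $\mu(s,a)$ lowers $\ell$ by at least one and leaves the clocks unchanged. A deliberation step $\Theta_i$ leaves $h$ unchanged and lowers $c_i$ by one, and it is available only when $c_i>0$. So $\phi$ strictly decreases along every transition, every play has length at most $\phi(s_0)$, and the set of public histories is finite.

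\textbf{Step 3: perfect information.} At each nonterminal history exactly one player, $I(h)$, chooses, and her choice set is nonempty: $\cM_i(h,k_i)\neq\varnothing$, possibly together with Deliberate. There is no chance, and the full state is observed, so every public history is a singleton information set. Payoffs at the terminal state are $u(h)$. Strategies are all history-contingent choices, so the expanded game is a finite extensive-form game with perfect information.

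\textbf{Step 4: existence of a pure SPE.} Kuhn's backward-induction argument now applies. I will induct on $\phi$, or equivalently on the remaining depth. At each history the mover selects a choice maximizing her continuation payoff given the already-fixed continuation profiles, breaking ties by a fixed order. This yields a pure strategy profile that is optimal in every subgame by the one-shot deviation principle, which is valid for finite games. Hence it is a subgame-perfect equilibrium.

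\textbf{Step 5: the zero-sum recursion.} When $u_2=-u_1$, every SPE gives player 1 the same continuation value in every subgame, namely the minimax value of that finite perfect-information subgame. I will prove this by the same induction.

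The induction also shows that this value depends on the history only through the current state $s$. The continuation game from a history depends only on $s$: the transition maps $\mu$ and $\Theta_i$, the admissible sets $\cM_i$, and the payoffs are all functions of $s$ alone. So $V(s)$ is well defined.

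At player-1 states, player 1's options are the moves $a\in\cM_1(h,k_1)$, leading to $\mu(s,a)$, and, when $c_1>0$, deliberation, leading to $\Theta_1(s)$. Maximizing over these gives \eqref{time:eq:recursion1}, and symmetrically minimizing at player-2 states gives \eqref{time:eq:recursion2}. Well-foundedness of $\phi$ guarantees that the recursion has a unique solution, computable by induction on $\phi$.

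\textbf{Main obstacle.} The only real issue is Step 2. The expanded game repeats physical histories, and preparation states could in principle cycle, so finiteness of $S$ alone does not yield a finite tree. The potential $\phi$ handles this, and it relies on the fact that deliberation requires a positive clock while physical moves advance the finite tree $\Gamma$. I also need to check that the nonmover's clock never increases, which holds by \eqref{time:eq:think1} and \eqref{time:eq:think2}. Everything after Step 2 is standard.
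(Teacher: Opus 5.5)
Your proposal is correct and follows essentially the same route as the paper. The paper uses the same potential, $d(h)+c_1+c_2$ (your $\ell(h)+c_1+c_2$), to show that the expanded tree is finite, then invokes public observability and ordinary backward induction; your Steps 3--5 spell out what the paper compresses into its final sentence, including why $V$ depends only on the state $s$.
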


\begin{proof}
Let $d(h)$ be the maximum number of physical moves remaining after $h$. Each action strictly reduces $d(h)+c_1+c_2$: deliberation reduces one clock, and a physical move reduces $d(h)$. Thus the expanded tree is finite. All decisions and states are public, so backward induction yields the asserted equilibrium and recursion.
\end{proof}

This is an application of finite-game backward induction. The public-state assumption is substantive: if preparation states were privately observed, the expanded game would generally have imperfect information and Proposition~\ref{time:prop:existence} would no longer follow from ordinary backward induction.

Preparation matters here through the admissibility restrictions. Indeed, in the zero-sum case, if $\cM_i(h,k_i)=A(h)$ everywhere, then $V(s)$ equals the value of the underlying physical game, and moving immediately according to physical-game backward induction is optimal. With state-dependent admissibility, the same physical position and clocks can instead have different values at different preparation states.

\subsection{An example}\label{time:sec:illustration}

The distinction between a solution and a completed calculation is visible in a single game. In Figure~\ref{time:fig:refutation-game}, player 1 chooses Safe or Risky. Safe gives $(0,0)$. After Risky, player 2 chooses among $q\ge3$ replies. Replies $b_1,\ldots,b_{q-1}$ give $(1,-1)$, while $b_q$ gives $(-1,1)$. Backward induction in the physical game prescribes $b_q$ and therefore Safe.

\begin{figure}[htbp]
\centering
\begin{tikzpicture}[
    x=1cm,y=1cm,
    every node/.style={font=\small},
    player/.style={circle,draw,fill=black,text=white,inner sep=2pt},
    edge/.style={line width=.6pt}
]
\node[player,label=above:{$h_0$}] (h0) at (0,0) {1};
\node (z0) at (-2.3,-1.25) {$(0,0)$};
\node[player,label=above right:{$h_1$}] (h1) at (2.1,-1.25) {2};
\node (z1) at (-0.6,-3.1) {$(1,-1)$};
\node (z2) at (1.4,-3.1) {$(1,-1)$};
\node (dots) at (3.2,-3.1) {$\cdots$};
\node (zq) at (5.1,-3.1) {$(-1,1)$};
\draw[edge] (h0)--node[above left]{$\Safe$}(z0);
\draw[edge] (h0)--node[above right]{$\Risky$}(h1);
\draw[edge] (h1)--node[above left]{$b_1$}(z1);
\draw[edge] (h1)--node[left]{$b_2$}(z2);
\draw[edge] (h1)--(dots);
\draw[edge] (h1)--node[above right]{$b_q$}(zq);
\end{tikzpicture}
\caption{A fixed two-stage zero-sum game. The refutation is $b_q$, and Safe is the physical-game backward-induction action.}
\label{time:fig:refutation-game}
\end{figure}

Suppose player 1 starts without a completed evaluation of this continuation. Her deliberation rule examines $b_1,b_2,\ldots,b_q$ in that order, one reply per clock unit, and records the resulting comparisons in $k_1$. No previously established comparison supplies the value of an unexamined reply. If her clock permits only $m<q$ units, this calculation does not reach $b_q$ before she must move. Its record therefore does not establish the negative continuation value of Risky.

Note that this is not a claim that Safe is impossible to choose, or that every way of analyzing this particular game takes $q$ steps. Safe may be admitted from the outset and chosen without a completed calculation. The public equilibrium analysis in Section~\ref{time:sec:game} also has access to the full primitives. The distinction is between the action prescribed by the analyst and the comparisons recorded by the stipulated deliberation rule.

In what follows, a move is \emph{verified by deliberation} when the completed evaluations and retained comparisons establish that it is a best response to the specified continuation. Verification is a requirement on the analysis. A known answer retained before play counts as preparation. Absent such preparation, a clock can force a legal move before the prescribed analysis has established its optimality.

\subsection{Strategic thinking and the exploitation window}\label{time:sec:window}

Now let preparation affect feasible play. Player 1 chooses Safe, worth zero, or an attack $a$, after which player 2 replies and the game ends. Safe is always admissible. The attack becomes admissible when player 1 has accumulated $d_a>0$ units of relevant analysis. A defending reply becomes admissible to player 2 at $c_a>0$ units of branch-specific preparation. Before then, every admissible reply gives player 1 the same payoff $w>0$; once a defense is admissible, player 2 can obtain a player-1 payoff at most zero. Admissible reply sets expand with retained preparation.

The specialization starts with the full clocks $c_1=T_1$ and $c_2=T_2$. Player 1 has initial preparation $M_{1,a}\ge0$ and rate $r_1>0$. Player 2 has initial preparation $M_{2,a}\ge0$, rate $\lambda_a\ge0$ on this branch during player 1's deliberation, and rate $r_2>0$ on her own turn. The quantities $d_a$, $M_{1,a}$, and $r_1t$ are measured in a common player-1 analysis unit; the quantities $c_a$, $M_{2,a}$, $\lambda_at$, and $r_2T_2$ are measured in a common player-2 branch-preparation unit. The two players' units need not be comparable to each other. A physical move retains the branch-specific work. Thus, if player 1 moves at integer time $t$, player 2 can have $M_{2,a}+\lambda_a t+r_2T_2$ units before replying. These rates are a reduced-form specialization of the history-dependent maps $D_i$: player 2's update during a player-1 history may add $\lambda_a$, whereas her update during her own physical turn may add $r_2$. These are restrictions on $D_i$, $R_i$, and $\cM_i$.

For a real number $x$, write $[x]_+=\max\{x,0\}$. Define the first integer dates at which the attack and its eventual defense become feasible:
\begin{align}
 t_a^A&=\left\lceil\frac{[d_a-M_{1,a}]_+}{r_1}\right\rceil,\label{time:eq:attacktime}\\
 t_a^D&=\inf\{t\in\mathbb Z_{\ge0}:M_{2,a}+\lambda_a t+r_2T_2\ge c_a\}.\label{time:eq:defensetime}
\end{align}
The infimum of the empty set is $+\infty$. If $\lambda_a>0$, then
$t_a^D=\lceil[c_a-M_{2,a}-r_2T_2]_+/\lambda_a\rceil$.
If $\lambda_a=0$, the defense date is zero when the initial and own-turn work suffice, and $+\infty$ otherwise.

\begin{proposition}[Exploitation window]\label{time:prop:window}
In this two-stage game, player 1 can obtain $w$ against an optimal continuation by player 2 if and only if
\begin{equation}\label{time:eq:window}
t_a^A\le T_1\quad\text{and}\quad t_a^A<t_a^D.
\end{equation}
The profitable integer movement dates are exactly $t_a^A\le t\le T_1$ with $t<t_a^D$. A tie-breaking convention favoring earlier movement among equal terminal payoffs selects $t_a^A$ uniquely; without it, every profitable date is optimal.
\end{proposition}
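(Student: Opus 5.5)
The plan is to reduce the expanded two-stage game to a one-dimensional choice of movement date and then apply the recursion of Proposition~\ref{time:prop:existence}. Because the game is zero-sum in the relevant sense (player 2 minimizes player 1's payoff once she replies), I will use the value function $V$. In the specialization, player 1's only relevant decision is the integer date $t\in\{0,\ldots,T_1\}$ at which she stops deliberating, together with the physical action chosen then. By construction her accumulated preparation at date $t$ is $M_{1,a}+r_1t$, so the attack is admissible at $t$ if and only if $M_{1,a}+r_1t\ge d_a$. I will check that this inequality is equivalent to $t\ge t_a^A$ as defined in \eqref{time:eq:attacktime}: when $d_a\le M_{1,a}$ both sides hold for all $t\ge0$, and otherwise the condition is $t\ge (d_a-M_{1,a})/r_1$, whose least integer solution is the ceiling.

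Next I treat player 2's continuation after an attack at date $t$. I will argue that deliberating through her whole clock is weakly optimal for her. Admissible reply sets expand with retained preparation, and she bears no cost beyond the clock, so by backward induction on $c_2$ using \eqref{time:eq:recursion2}, the minimum over reply sets is monotone in preparation and her value is the value at maximal preparation. This gives her $M_{2,a}+\lambda_a t+r_2T_2$ units. A defense is then available if and only if this quantity is at least $c_a$, which is equivalent to $t\ge t_a^D$. Monotonicity in $t$ holds because $\lambda_a\ge0$, so the set of $t$ meeting the inequality is an upper set of integers with least element $t_a^D$ (or empty). Consequently the continuation value after attacking at $t$ is $w$ if $t<t_a^D$ and at most $0$ if $t\ge t_a^D$. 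The value of Safe is $0$ at every date.

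Combining the two steps via \eqref{time:eq:recursion1}, player 1's value is $w>0$ exactly when some $t$ satisfies $t_a^A\le t\le T_1$ and $t<t_a^D$. Such a $t$ exists if and only if $t_a^A\le T_1$ and $t_a^A<t_a^D$, which is \eqref{time:eq:window}. Each such $t$ yields $w$, and every other plan yields at most $0<w$, so these are exactly the profitable dates. Since all profitable dates give the same terminal payoff, every one of them is optimal. Under the earlier-movement tie-break the least profitable date, namely $t_a^A$, is selected uniquely.

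The main obstacle is the second step. It requires justifying that player 2's own-turn deliberation can be collapsed to its maximal level, and that the reduced-form rates faithfully encode $D_2$ and $R_2$: work done during player 1's deliberation must be retained through the move, with $\lambda_a t$ accruing exactly over the $t$ units. I would first establish monotonicity of $\min_{b\in\cM_2}$ in preparation by induction on $c_2$. I would then invoke the stated retention assumption on $R_2$.
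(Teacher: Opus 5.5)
Your proposal is correct and follows the same route as the paper's proof. The attack is admissible exactly when $t\ge t_a^A$; expanding reply sets and costless deliberation let player 2 reach preparation $M_{2,a}+\lambda_a t+r_2T_2$, so a defense is attainable exactly when $t\ge t_a^D$; and every other date gives at most the Safe payoff. You make explicit three steps the paper only asserts: the ceiling computation, the induction on $c_2$ via the player-2 recursion, and the observation that the window is nonempty precisely when $t=t_a^A$ is itself profitable.
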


\begin{proof}
The attack is admissible exactly when $t\ge t_a^A$. Because replies expand with preparation and thinking has no direct cost, player 2 can attain her best available reply by using enough of her own clock, up to $T_2$. A defense is attainable by that deadline exactly when $t\ge t_a^D$. Hence an admissible attack gives $w$ precisely at the stated dates. At all other dates it is unavailable or is weakly dominated by Safe in continuation payoff. The earliest profitable date is $t_a^A$.
\end{proof}

Greater preparation or a faster rate for player 1 weakly advances $t_a^A$. Greater preparation, remaining time, or branch attention for player 2 weakly advances $t_a^D$ and contracts the window. The qualifiers are weak because time is discrete. The rate $\lambda_a$ is part of the specified deliberation technology; its strategic allocation across several attacks is not determined by Proposition~\ref{time:prop:window}.

\begin{example}[An interior equilibrium movement date]\label{time:ex:tactical}
Let $T_1=4$, $T_2=1$, and $w=1$. Suppose player 1 has no initial preparation on the attack, $M_{1,a}=0$, needs $d_a=4$ units before the attack is admissible, and accumulates analysis at rate $r_1=2$. Then
\[
t_a^A=\left\lceil\frac{4}{2}\right\rceil=2.
\]
Suppose player 2 has no initial branch preparation, $M_{2,a}=0$, needs $c_a=4$ units before a defense is admissible, accumulates one unit during each period of player 1's deliberation, $\lambda_a=1$, and one unit on her own turn, $r_2=1$. Then
\[
t_a^D=\inf\{t\in\mathbb Z_{\ge0}:t+1\ge4\}=3.
\]
Thus the exploitation window consists of the single date $t=2$. At $t=0$ and $t=1$ the attack is not yet admissible, so moving immediately gives Safe and payoff zero. At $t=2$ the attack is admissible while the defense is still unavailable by player 2's deadline, so moving yields $1$. At $t=3$ or $t=4$ the defense is available and Safe is weakly better. Hence, starting from $t=0$, subgame-perfect play has player 1 deliberate twice and then use the attack at $t=2$. This is a literal numerical specialization of Proposition~\ref{time:prop:window}; the expanding-reply-set issue is treated separately in Section~\ref{time:sec:verification}.
\end{example}

\subsection{Verification with an expanding response set}\label{time:sec:verification}

The exploitation window concerns payoffs in the expanded game. We now ask a separate question: can the mover's completed analysis establish the best response to the opponent's eventual reply set? The target here is the optimal immediate physical move at the chosen date.

Consider a two-stage continuation that starts with the full clocks $c_1=T_1$ and $c_2=T_2$, with Safe worth zero and $N\ge1$ ordered replies $b_1,\ldots,b_N$ after Risky. The order is common to the two deliberation rules: the scalar preparation measures below count progress through these same nested replies. If player 1 moves after $t$ units of deliberation, player 2 can select a best reply within
\begin{equation}\label{time:eq:responses}
B(t)=\{b_1,\ldots,b_{q(t)}\},\qquad
q(t)=\min\{N,M_2+\lambda t+r_2T_2\}.
\end{equation}
Here $M_2$ is retained preparation, $\lambda$ is the relevant background rate, and $r_2$ is the own-turn rate. Assume $M_2,r_2,\lambda$ are nonnegative integers, $r_2>0$, and $1\le q(t)\le N$ at every integer $t\in\{0,\ldots,T_1\}$. The response rule is fixed: player 2 uses a best reply in $B(t)$, and no reply outside $B(t)$ can enter the response, whether as an evaluated move or a fallback. Such a rule is consistent with nested admissibility sets and terminal own-turn deliberation. Any initially admissible fallback must already belong to $B(t)$.

Let $v_j=u_1(h_1b_j)\in\{-1,1\}$ be the payoff after reply $b_j$, where $h_1$ is the history following Risky. A reply with $v_j=-1$ is a refutation. The unique best immediate root action against the specified response rule is
\begin{equation}\label{time:eq:target}
a^*(t)=
\begin{cases}
\Risky,&\min_{1\le j\le q(t)}v_j=1,\\
\Safe,&\min_{1\le j\le q(t)}v_j=-1.
\end{cases}
\end{equation}
Both root actions are admitted for this comparison. When $q(t)<N$, \eqref{time:eq:target} is the best immediate action against the bounded response set $B(t)$; it need not coincide with the backward-induction action in the unrestricted physical game. Only when the relevant response set coincides with the full continuation does the distinction disappear.

Player 1 retains completed evaluations of the first $M_1$ replies, with $0\le M_1\le N$, and can evaluate $r_1$ additional replies per period, where $r_1$ is a positive integer. For this verification exercise, any prescribed order is feasible, completed evaluations are retained, and the resulting payoff comparisons require no further clock charge. An order may depend on earlier evaluations; once specified, that dependence is part of the deliberation rule. Thus the first $\min\{N,M_1+r_1t\}$ replies can be completely evaluated by time $t$. The number $M_1$ denotes these particular retained evaluations.

Verification has its meaning from Section~\ref{time:sec:illustration}: the completed analysis must establish the comparison in \eqref{time:eq:target}. In this setting, finding one relevant refutation establishes Safe; evaluating every reply in $B(t)$ and finding no refutation establishes Risky. A verification must remain valid for every assignment of unresolved continuation payoffs consistent with the completed comparisons. This is a requirement on the calculation. The following assumption supplies the necessary restriction on other ways of completing the comparison.

\begin{assumption}[Common unresolved continuation]\label{time:ass:common}
There is an admissible continuation in which every reply after Risky gives player 1 payoff $1$. Along the course of deliberation in that continuation, at every date and for every permitted order of examination, the root comparison remains unverified whenever some reply in $B(t)$ has not been evaluated. No retained conclusion, indirect deduction, or observation of the opponent supplies the missing comparison beyond the evaluations counted above. The same continuation supports this course of deliberation through the entire deadline.
\end{assumption}

The assumption is a restriction on what the specified deliberation can establish. It holds, for example, when every unexamined reply is left unresolved as a possible sole refutation by the calculation, and the record changes only through evaluating replies. On the course where all evaluations favor Risky, any unexamined relevant reply then prevents verification. Other courses may terminate much earlier when a refutation is found. This does not introduce private payoff information into the public equilibrium model: it specifies which comparisons must actually have been completed for the separate verification requirement.

A stopping rule may condition its movement date on all evaluations already completed, but must move no later than $T_1$. A guarantee of verification requires its current comparison to be established whenever it stops, for every continuation permitted by this requirement. The common continuation in Assumption~\ref{time:ass:common}, rather than a different difficult continuation at each date, is what makes a bound on such stopping rules possible.

\begin{theorem}[Verification frontier]\label{time:thm:frontier}
Under the preceding evaluation and response restrictions and Assumption~\ref{time:ass:common}, a stopping rule guaranteeing verification of \eqref{time:eq:target} by the deadline exists if and only if
\begin{equation}\label{time:eq:some-time}
M_1+r_1t\ge q(t)\quad\text{for some }t\in\{0,\ldots,T_1\}.
\end{equation}
Equivalently,
\begin{equation}\label{time:eq:endpoints}
\begin{split}
M_1&\ge\min\{N,M_2+r_2T_2\},\\
&\hspace{1em}\text{or}\\
M_1+r_1T_1&\ge\min\{N,M_2+\lambda T_1+r_2T_2\}.
\end{split}
\end{equation}
If both inequalities fail, no stopping rule based on earlier evaluations can guarantee verification. Randomizing the order of examination or the stopping date cannot restore a probability-one guarantee subject to the same deadline and evaluation limits.
\end{theorem}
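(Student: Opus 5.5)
The proof has three parts: sufficiency by an explicit stopping rule, necessity through the common continuation of Assumption~\ref{time:ass:common}, and reduction of \eqref{time:eq:some-time} to the endpoint form \eqref{time:eq:endpoints} using linearity of the two sides.

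\emph{Sufficiency.} Suppose $t^\ast$ satisfies \eqref{time:eq:some-time}. Player 1 examines $b_1,b_2,\ldots$ in the common order. She stops at the first date $t\le t^\ast$ at which either a refutation $b_j$ with $j\le q(t)$ has been found among the completed evaluations, or all of $b_1,\ldots,b_{q(t)}$ have been evaluated. By time $t$ the first $\min\{N,M_1+r_1t\}$ replies are evaluated. At $t^\ast$ this covers $B(t^\ast)$, since $q(t^\ast)\le N$. Hence the rule stops no later than $t^\ast\le T_1$.

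At the stopping date the comparison \eqref{time:eq:target} is established. A refutation inside $B(t)$ certifies Safe, and full evaluation of $B(t)$ with no refutation certifies Risky. One point needs checking. A refutation found at an index $j>q(t)$ at an earlier date may enter $B$ later, because $q$ is nondecreasing in $t$. It does not matter, because the rule certifies only against the current $B(t)$ when it stops.

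\emph{Necessity.} Work on the common continuation of Assumption~\ref{time:ass:common}, where every reply gives player 1 payoff $1$. Along this course no refutation is ever found, so the information available to any stopping rule is the same at every date. A deterministic rule therefore stops at a fixed date $\tau\le T_1$ on this course.

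Suppose \eqref{time:eq:some-time} fails at $\tau$. Then fewer than $q(\tau)$ replies are evaluated, whatever order was used, since each period adds at most $r_1$ evaluations to the $M_1$ retained ones. Some reply in $B(\tau)$ is therefore unevaluated. By the assumption, the comparison is then unverified. So a guarantee forces \eqref{time:eq:some-time} at $\tau$.

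The main obstacle is that the stopping rule may adapt its order and timing to earlier evaluations. The argument handles this because a single continuation must defeat the rule at every date. That is exactly why Assumption~\ref{time:ass:common} requires one common continuation rather than a different hard continuation at each date.

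Randomization is handled pathwise. For each realization of the order and stopping date, the same common continuation yields an unverified stop at that date if \eqref{time:eq:some-time} fails there. If \eqref{time:eq:some-time} fails at every date, every realization fails with certainty, so no probability-one guarantee exists.

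\emph{Endpoint equivalence.} Write $f(t)=M_1+r_1t-q(t)$.
\begin{itemize}
\item When $q(t)=N$, $f$ is increasing.
\item When $q(t)<N$, $f(t)=M_1-M_2-r_2T_2+(r_1-\lambda)t$, which is affine in $t$.
\item $q(t)=\min\{N,\cdot\}$ is concave in $t$, being a minimum of affine functions. Hence $f$ is convex on $\{0,\ldots,T_1\}$.
\end{itemize}
A convex function attains its maximum over $\{0,\ldots,T_1\}$ at an endpoint. Therefore $\max_t f(t)\ge0$ if and only if $f(0)\ge0$ or $f(T_1)\ge0$. These two conditions are exactly the two lines of \eqref{time:eq:endpoints}, because $q(0)=\min\{N,M_2+r_2T_2\}$ and $q(T_1)=\min\{N,M_2+\lambda T_1+r_2T_2\}$.
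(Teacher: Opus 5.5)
Your proposal is correct and follows the paper's proof essentially step for step. Sufficiency examines all of $B(t^\ast)$ at a date satisfying \eqref{time:eq:some-time}. Necessity follows the single common continuation up to the rule's own stopping date, pathwise for randomized rules. The endpoint form comes from convexity of $M_1+r_1t-q(t)$. The only cosmetic difference is that your sufficiency rule may stop early when a refutation appears inside the current $B(t)$, whereas the paper simply uses the preassigned date $t^\ast$.
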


\begin{proof}
For sufficiency, choose a date satisfying \eqref{time:eq:some-time}. The retained initial segment and the available additional evaluations suffice to examine all of $B(t)$ by that date. Comparing the resulting minimum with zero establishes \eqref{time:eq:target}. A preassigned stopping date therefore suffices.

For necessity, suppose $M_1+r_1t<q(t)$ at every feasible date. Take any stopping rule and follow its deliberation in the common continuation specified by Assumption~\ref{time:ass:common}. Let $\tau\le T_1$ be its stopping date on that course. At most $M_1+r_1\tau$ distinct replies have been evaluated. Hence some reply in $B(\tau)$ remains unevaluated. By the assumption, the comparison is still unverified at $\tau$. This argument follows one continuation until the rule itself chooses to stop; it does not select the continuation after fixing a stopping date. For a randomized rule, the same conclusion holds for every realized order and stopping date on that common continuation, so verification cannot hold with probability one there.

For the endpoint characterization, extend
\[
F(t)=M_1+r_1t-\min\{N,M_2+\lambda t+r_2T_2\}
\]
to real $t\in[0,T_1]$. Equivalently,
\[
F(t)=\max\{M_1+r_1t-N,\ M_1-M_2-r_2T_2+(r_1-\lambda)t\}.
\]
It is the maximum of two affine functions and is therefore convex. Its maximum on the interval is attained at an endpoint. Both endpoints are permitted integer dates, and the formula agrees with the integer evaluation counts. Thus \eqref{time:eq:some-time} is equivalent to $F(0)\ge0$ or $F(T_1)\ge0$, which gives \eqref{time:eq:endpoints}.
\end{proof}

The common unresolved-continuation assumption is essential to the necessity argument. A statement that some continuation requires $q(t)$ evaluations at each preassigned date is not enough: a stopping rule may stop early when the early comparison is easy and continue when a later comparison is easy. Nor does it suffice that the set of payoffs favoring Safe expands with the response set. Theorem~\ref{time:thm:frontier} instead uses one unresolved course at every possible stopping date. It does not assert that every game or every course of deliberation needs all $q(t)$ evaluations.

The preparation and integer-time restrictions matter as well. Retained evaluations outside $B(t)$ need not help establish \eqref{time:eq:target}; a previously established solution may help much more than its length suggests. The scalar credit $M_1$ is exact here because it consists of the initial segment of evaluated replies. If the rates require rounding, the feasible-date condition must be checked directly with the actual integer evaluation counts: rounding need not preserve the endpoint argument.

\begin{corollary}[Equal rates without preparation]\label{time:cor:equal}
Under the assumptions of Theorem~\ref{time:thm:frontier}, suppose $M_1=M_2=0$, $r_1=r_2=\lambda=r>0$, and $T_2>0$. A guarantee of verification by player 1's deadline is possible if and only if
\begin{equation}\label{time:eq:equal}
N\le rT_1.
\end{equation}
\end{corollary}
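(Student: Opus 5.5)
The plan is to specialize the endpoint characterization \eqref{time:eq:endpoints} of Theorem~\ref{time:thm:frontier} and simplify each of the two inequalities. Since the theorem already gives necessity and sufficiency, the corollary reduces to showing that, under $M_1=M_2=0$, $r_1=r_2=\lambda=r>0$ and $T_2>0$, the disjunction in \eqref{time:eq:endpoints} is equivalent to $N\le rT_1$.

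First I will dispose of the initial endpoint. With $M_1=M_2=0$ the first inequality reads $0\ge\min\{N,rT_2\}$. Because $N\ge1$, $r>0$ and $T_2>0$ (so $rT_2\ge r>0$), the right side is strictly positive, and the inequality fails. This is the one place the hypothesis $T_2>0$ is used. It reflects that player 2's own-turn preparation guarantees a nonempty response set that player 1 has not yet evaluated at $t=0$.

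Second, the terminal endpoint reads $rT_1\ge\min\{N,rT_1+rT_2\}$. I split on which term attains the minimum. If $N\le rT_1+rT_2$, the condition is $rT_1\ge N$. If $N>rT_1+rT_2$, it is $rT_1\ge rT_1+rT_2$, which is false because $rT_2>0$; in this case $N>rT_1$ also holds, so the condition again coincides with $N\le rT_1$. In both cases the terminal inequality is equivalent to \eqref{time:eq:equal}. Together with the first step, the disjunction reduces to \eqref{time:eq:equal}, and Theorem~\ref{time:thm:frontier} delivers the stated equivalence.

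There is no real obstacle. The only points to check are the standing hypotheses of Theorem~\ref{time:thm:frontier}: the rates must be integers, and we need $1\le q(t)\le N$. These are inherited via ``under the assumptions of Theorem~\ref{time:thm:frontier}''. I will also note that the background-rate term $\lambda T_1=rT_1$ cancels exactly against player 1's own accumulation. This cancellation is the substantive content: with equal rates, deliberation never gains on the opponent, so player 1 can verify only by exhausting all $N$ replies by her deadline.
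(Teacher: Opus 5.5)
Your proof is correct and follows essentially the paper's own argument: you specialize the endpoint characterization of Theorem~\ref{time:thm:frontier}, note that the first condition $0\ge\min\{N,rT_2\}$ fails, and reduce the second to $rT_1\ge N$ using the gap $rT_2>0$ between the response set and player 1's evaluations. One small correction: $T_2>0$ is not used only at the initial endpoint, since your second case also needs $rT_2>0$ to rule out $rT_1\ge rT_1+rT_2$ (with $T_2=0$ that inequality would hold trivially).
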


\begin{proof}
Before the response set reaches all $N$ replies, it has size $rt+rT_2$, whereas player 1 can have evaluated only $rt$ replies. The gap is $rT_2>0$. The first endpoint condition fails, and the second holds exactly when $rT_1\ge N$. Apply Theorem~\ref{time:thm:frontier}.
\end{proof}

The result concerns guaranteed verification of the immediate best reply to the stipulated bounded response set. An early refutation, a retained solution, or an additional valid inference can permit earlier verification when Assumption~\ref{time:ass:common} does not apply. Even where the assumption applies, player 1 can still choose an admissible move without verification. Neither the frontier nor the corollary rules out equilibrium or profitable play.

\paragraph{The terminal second mover.}
After a branch $a$ is chosen, the terminal responder faces a fixed comparison. Suppose $C_{2,a}$ units of retained branch-specific analysis are necessary and sufficient for her deliberation rule to establish an exact best reply. With initial preparation $M_{2,a}$, background rate $\lambda_a$, and own-turn rate $r_2>0$, the additional number of own-clock periods required after movement at $t$ is
\[
\left\lceil\frac{[C_{2,a}-M_{2,a}-\lambda_a t]_+}{r_2}\right\rceil.
\]
Completion is possible precisely when this number is at most $T_2$. With retained analysis, expanding admissible replies, and no direct thinking cost, continuing until the deadline cannot worsen her best available reply. At nonterminal positions, further delay may again prepare an opponent who will move later.

\clearpage
\section{Historical Equilibrium: Rational Forward Inference and Backward Induction}

\begin{sectionoverview}
    This section introduces Historical Equilibrium (HE), an extensive-form solution concept that combines backward evaluation of continuation payoffs with forward inference from observed play. Players choose complete behavioral plans together with interaction weights that generate a joint law over contingent actions. HE is Nash equilibrium in this game. The section provides applications to increasing-sum and constant-sum centipede games.
\end{sectionoverview}

\subsection{Introduction}

Historical Equilibrium (HE) is an extensive-form solution concept in which players choose complete behavioral plans together with interaction weights. These inputs generate a joint law over complete contingent behavior. An observed history is interpreted by inserting its realized actions into the same interaction system and renormalizing, so every feasible history---including a zero-probability surprise---has a coherent continuation law. HE is Nash equilibrium in this game. It combines backward evaluation of continuation payoffs with \emph{forward inference}: observed actions reveal information about complete plans, and players anticipate that rational inference when choosing their plans. This section gives the definition, a finite mixed-existence result, and applications to centipede games \citep{Rosenthal1981}.

Backward induction treats a reached history as a separate continuation problem, with the preceding path strategically irrelevant. HE uses a different unilateral counterfactual: a player changes an entire behavioral-and-interaction plan while opponents' inputs remain fixed, and the induced joint law is recomputed. Consequently, a player’s planned action at a later node can be associated with an earlier response, so changing that planned action can also affect the probability that the later node is reached.

An observed continuation eliminates complete plans that would have stopped, thereby changing the posterior distribution over the plans still compatible with play. This is rational inference. It becomes forward inference when the observed action is used to infer future behavior from the underlying complete plan. Players choose their plans anticipating that opponents will make exactly this inference. At a fixed plan profile, continuation payoffs may still be evaluated backward; what HE rejects is a fresh local maximization that discards the informational content of the history. Thus HE combines backward evaluation with forward inference.

\subsection{Historical Equilibrium}

Consider a finite extensive-form game. Let $I$ be its player set, $D$ its decision nodes, $A(h)$ the finite action set at node $h$, and $I(h)$ the active player. A complete contingent action configuration is $\omega\in\Omega:=\prod_{h\in D}A(h)$; it induces a terminal history $z(\omega)$ and material payoff $u_i(z(\omega))$.

Player $i$ chooses a complete HE plan $\tau_i=(\sigma_i,\psi_i)$. Its behavioral part consists of \emph{baseline probabilities} $\sigma_h\in\Delta(A(h))$ at every node controlled by $i$. Its interaction part consists of positive interaction weights $\psi_{iF}:A_F\to[\underline\psi_{iF},\overline\psi_{iF}]$, where each scope $F\subseteq D$ is an admissible set of path-compatible decision nodes whose actions may interact. Let $\mathcal T_i$ denote the set of all such feasible pairs $(\sigma_i,\psi_i)$. Because the game is finite and each interaction weight takes values in a closed bounded interval, $\mathcal T_i$ is compact. Given $\tau=(\tau_i)_{i\in I}$, define
\begin{equation}\label{he:eq:root-law}
 W_\tau(\omega)=\prod_{h\in D}\sigma_h(\omega_h)\prod_{F}\prod_i\psi_{iF}(\omega_F),
 \qquad
 Q_\tau^{\varnothing}(\omega)=\frac{W_\tau(\omega)}{\sum_{\eta\in\Omega}W_\tau(\eta)}. 
\end{equation}
The first product contains baseline probabilities; the second contains interaction weights. No mediator selects $Q_\tau^{\varnothing}$: it is generated by the players' separate choices.

For a feasible history $\rho$, let $D(\rho)$ be the nodes whose actions are fixed by $\rho$. For a remaining action assignment $\omega_{-D(\rho)}$, insert the observed actions into every interaction weight, omit the already realized baseline factors, and set
\begin{equation}\label{he:eq:history-law}
 \widetilde W_\tau^{\rho}(\omega_{-D(\rho)})=
 \prod_{g\notin D(\rho)}\sigma_g(\omega_g)
 \prod_F\prod_i\psi_{iF}\bigl(\rho_{F\cap D(\rho)},\omega_{F\setminus D(\rho)}\bigr),
 \quad
 Q_\tau^{\rho}=\frac{\widetilde W_\tau^{\rho}}{\sum\widetilde W_\tau^{\rho}}. 
\end{equation}
Strict positivity of the interaction weights makes the denominator positive. If $\rho$ has positive root probability, \eqref{he:eq:history-law} is ordinary Bayesian conditioning. If it has zero root probability, \eqref{he:eq:history-law} is the structural continuation of the same model. Hence surprises can be unlikely, but they are never uninterpreted.

\begin{definition}[Historical Equilibrium]\label{he:definition:auto-1}
A plan profile $\tau^*\in\prod_i\mathcal T_i$ is a \emph{Historical Equilibrium} if, for every player $i$,
\begin{equation}\label{he:eq:he}
 \tau_i^*\in\arg\max_{\tau_i\in\mathcal T_i}
 \E_{Q_{(\tau_i,\tau_{-i}^*)}^{\varnothing}}
 \bigl[u_i(z(\omega))\bigr]. 
\end{equation}
The HE includes the complete family of structural kernels $\{Q_{\tau^*}^{\rho}\}_{\rho}$. A deviation holds opponents' baseline probabilities and interaction weights fixed, but generally changes the normalized law and therefore their induced conditional behavior.
\end{definition}

\begin{theorem}[Finite mixed existence]\label{he:theorem:auto-1}
Every finite extensive-form game with compact HE plan sets and continuous strictly positive interaction-weight functions has a mixed Historical Equilibrium.
\end{theorem}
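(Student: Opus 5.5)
Since the statement asks only for a \emph{mixed} Historical Equilibrium, the plan is to treat the HE plan-choice problem of Definition~\ref{he:definition:auto-1} as a normal-form game with compact strategy sets $\mathcal T_i$ and continuous payoffs, and to apply Glicksberg's theorem to its mixed extension. I do not need concavity or quasiconcavity of payoffs in $\tau_i$, which would generally fail. Each player randomizes over $\mathcal T_i$ by choosing a Borel probability measure $\mu_i\in\Delta(\mathcal T_i)$, and payoffs are extended by integration:
\[
U_i(\mu)=\int \E_{Q^{\varnothing}_{\tau}}[u_i(z(\omega))]\,d\textstyle\prod_j\mu_j(\tau_j).
\]
A mixed HE is a profile $\mu^*$ such that each $\mu_i^*$ maximizes $U_i(\cdot,\mu^*_{-i})$ over $\Delta(\mathcal T_i)$.

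The steps are as follows. First, I check that $\mathcal T_i$ is a nonempty compact metrizable space. It is a finite product of simplices $\Delta(A(h))$ and of interaction-weight components. If the weights range over a compact family of continuous functions on the finite sets $A_F$, this is just a finite-dimensional product of closed intervals $[\underline\psi_{iF},\overline\psi_{iF}]$, so compactness is immediate, as the paper itself notes.

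Second, I show that $\tau\mapsto U_i(\tau)=\sum_{\omega}Q^{\varnothing}_\tau(\omega)u_i(z(\omega))$ is jointly continuous on $\prod_i\mathcal T_i$. Each $W_\tau(\omega)$ in \eqref{he:eq:root-law} is a finite product of coordinates of $\tau$, and these coordinates are continuous in $\tau$. The only possible discontinuity comes from the normalization in $Q^{\varnothing}_\tau$, so the key point is that the denominator $\sum_\eta W_\tau(\eta)$ is bounded away from zero. Each interaction factor is at least $\underline\psi_{iF}>0$. Moreover, for every $\tau$ the baseline product summed over $\omega$ equals $1$, because $\prod_h\sigma_h$ is a product probability on $\Omega$. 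Hence
\[
\sum_\eta W_\tau(\eta)\ \ge\ \prod_{F,i}\underline\psi_{iF}\ >\ 0
\]
uniformly in $\tau$. It follows that $Q^{\varnothing}_\tau$ is a continuous map into $\Delta(\Omega)$, and $U_i$ is continuous and bounded, since $u_i$ takes finitely many values.

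Third, I invoke Glicksberg's theorem (equivalently, Kakutani–Fan–Glicksberg applied to the best-reply correspondence on $\prod_i\Delta(\mathcal T_i)$ with the weak* topology). Each $\Delta(\mathcal T_i)$ is convex and weak*-compact. Because $U_i$ is continuous on a compact product, its multilinear integral extension is weak*-continuous. Best replies are therefore nonempty, convex and closed-graph, which gives a fixed point $\mu^*$. As a byproduct, the full family of structural kernels $\{Q^{\rho}_\tau\}_\rho$ is well defined for every $\tau$ in the support, by the same positivity bound applied to \eqref{he:eq:history-law}. This confirms that the equilibrium object of Definition~\ref{he:definition:auto-1} is complete.

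I expect the main obstacle to be interpretive rather than technical: fixing what ``mixed'' means. One option is randomization over complete plans $\tau_i$, as above. The other is that the plan's baseline probabilities already constitute behavioral mixing, in which case one would want a pure $\tau^*$. A pure fixed point would require quasiconcavity of $\tau_i\mapsto U_i$, and this generally fails because of the normalization, which is a ratio of multilinear forms. I will therefore commit to measures over $\mathcal T_i$, and verify that $U_i$ evaluated at a degenerate $\mu_i$ coincides with \eqref{he:eq:he}, so that the concept nests the pure definition. The secondary technical point is the uniform lower bound on the normalizing constant. Without strict positivity and the lower bounds $\underline\psi_{iF}$, continuity could fail at profiles where $W_\tau$ vanishes identically.
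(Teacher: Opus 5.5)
Your proposal is correct and follows essentially the same route as the paper. The paper's proof also takes $\mathcal T_i$ to be compact metric, gets continuity of the root payoff from the finiteness of $\Omega$ and the strictly positive normalizing denominator, and then applies Glicksberg's theorem to obtain an equilibrium in Borel probability measures over the plan sets, which is exactly its notion of a mixed HE. Your uniform lower bound $\prod_{F,i}\underline\psi_{iF}$ on that denominator just makes the continuity step explicit.
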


\begin{proof}
The product $\prod_i\mathcal T_i$ is a product of compact metric spaces. Because $\Omega$ is finite and the denominator in \eqref{he:eq:root-law} is strictly positive, each root expected payoff is continuous in the complete plan profile. Glicksberg's (\citeyear{Glicksberg1952}) extension of Nash's (\citeyear{nash1950}) theorem therefore gives a Nash equilibrium in Borel probability measures over the compact plan sets. By \eqref{he:eq:he}, this is a mixed HE.
\end{proof}

\subsection{The standard pairwise technology}

All numerical examples use one standard positive binary interaction: the log-linear, or Gibbs, odds-ratio parameterization. Suppose two successive decisions are $S$ or $C$. Let $y$ and $x$ be the first and second mover's baseline probabilities of $C$. Each player chooses a bounded interaction factor $\kappa_i\in[1,2]$, and $K:=\kappa_1\kappa_2\in[1,4]$. The unnormalized law is
\begin{equation}\label{he:eq:pairwise}
\begin{array}{c|cc}
 &S_2&C_2\\ \hline
S_1&(1-y)(1-x)&(1-y)x\\
C_1&y(1-x)&Kyx
\end{array}
\qquad\text{with}\qquad \mathcal Z=1+(K-1)xy. 
\end{equation}
After division by $\mathcal Z$, the odds ratio is exactly $K$; equivalently, the interaction term is $\log K$ in a binary log-linear model. Thus $K=1$ is independence and $K>1$ represents represents positive interaction between continuation actions. Other coherent positive interaction-weight or response-kernel technologies could be used, but \eqref{he:eq:pairwise} is the only technology used in this section's examples.

\subsection{The three-node centipede}

Consider

\begin{figure}[ht]
\centering
\begin{tikzpicture}[
  node distance=18mm and 25mm,
  decision/.style={circle,draw,inner sep=1.7pt},
  terminal/.style={rectangle,draw=none,inner sep=1pt}
]
\node[decision] (p1a) {$P_1$};
\node[terminal,below left=of p1a] (z1) {$(2,0)$};
\node[decision,below right=of p1a] (p2) {$P_2$};
\node[terminal,below left=of p2] (z2) {$(0,3)$};
\node[decision,below right=of p2] (p1b) {$P_1$};
\node[terminal,below left=of p1b] (z3) {$(4,1)$};
\node[terminal,below right=of p1b] (z4) {$(3,5)$};
\draw[-{Latex[length=2mm]}] (p1a) -- node[left] {$S_1$} (z1);
\draw[-{Latex[length=2mm]}] (p1a) -- node[right] {$C_1$} (p2);
\draw[-{Latex[length=2mm]}] (p2) -- node[left] {$S_2$} (z2);
\draw[-{Latex[length=2mm]}] (p2) -- node[right] {$C_2$} (p1b);
\draw[-{Latex[length=2mm]}] (p1b) -- node[left] {$S_3$} (z3);
\draw[-{Latex[length=2mm]}] (p1b) -- node[right] {$C_3$} (z4);
\end{tikzpicture}
\caption{The three-decision centipede.}
\label{he:eq:centipede-tree}
\end{figure}

Let $t$ be player 1's baseline probability of $C_1$, $y$ player 2's baseline probability of $C_2$, and $x$ player 1's baseline probability of $C_3$. Conditional on $C_1$, apply \eqref{he:eq:pairwise} to $(C_2,C_3)$. The continuation payoffs are
\begin{equation}\label{he:eq:continuation}
 U_1^{C_1}=\frac{4y(1-x)+3Kxy}{1+(K-1)xy},
 \qquad
 U_2^{C_1}=\frac{3(1-y)+y(1-x)+5Kxy}{1+(K-1)xy}. 
\end{equation}
For fixed $K$, player 2's interior best-response condition in $y$ gives $x(K)=1/(K+1)$, while player 1's interior condition in $x$ gives $y(K)=(3K-4)/[4(K-1)]$. At that candidate, both players strictly prefer a larger interaction factor, so $\kappa_1^*=\kappa_2^*=2$ and $K^*=4$. Player 1's continuation value is then $8/3>2$, so $t^*=1$.

\begin{proposition}[Three-node solution]\label{he:proposition:auto-1}
The pairwise game has an immediate-stopping HE class and one positive-continuation HE. In the latter, the auxiliary choices are $t^*=1$, $x^*=1/5$, $y^*=2/3$, and $\kappa_1^*=\kappa_2^*=2$. Conditional on $C_1$, the normalized law is
\begin{equation}\label{he:eq:centipede-law}
 Q^*(\cdot\mid C_1)=\frac1{21}
 \begin{array}{c|cc}
 &S_3&C_3\\ \hline
 S_2&4&1\\
 C_2&8&8
 \end{array}. 
\end{equation}
Hence $\Prb(C_2\mid C_1)=16/21$, $\Prb(C_3\mid C_1C_2)=1/2$, and expected payoffs are $(8/3,3)$. This positive HE strictly Pareto-dominates the stopping class.
\end{proposition}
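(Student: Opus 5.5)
The plan is to reduce the root problem to the continuation subgame after $C_1$, using the pairwise technology \eqref{he:eq:pairwise}. I would verify equilibrium by direct best-response arguments rather than through Theorem~\ref{he:theorem:auto-1}, since the proposition asserts particular pure plan profiles.

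\textbf{Reduction.} First I would write the root law \eqref{he:eq:root-law} for this tree. Only $C_1$ carries the baseline factor $t$, and the interaction factor $K$ acts on $(C_2,C_3)$ only on the $C_1$ branch. Under that convention, player 1's root payoff is $(1-t)\cdot 2+t\,U_1^{C_1}$ and player 2's is $t\,U_2^{C_1}$, with $U_i^{C_1}$ given by \eqref{he:eq:continuation}. This step needs care: I would check that the normalization splits this way, reading the scope as conditioned on $C_1$ as the text does. Player 1 then chooses $t=1$ exactly when $U_1^{C_1}>2$ and $t=0$ when $U_1^{C_1}<2$. Player 2's payoff is proportional to $t$, so at $t=0$ every choice of $(y,\kappa_2)$ is a best response.

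\textbf{The stopping class.} This class consists of profiles with $t=0$ and continuation inputs satisfying $U_1^{C_1}\le 2$. Such profiles exist, for example $y=0$, which gives $U_1^{C_1}=0$. Since player 2 is indifferent at $t=0$ and player 1 best-responds, these are HE, and their payoffs are $(2,0)$.

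\textbf{The positive HE.} With $t=1$ fixed, I would analyze the subgame $U_i^{C_1}$.
\begin{enumerate}
\item For fixed $K$ and $x$, $U_2^{C_1}$ is a linear-fractional function of $y$, so it is monotone in $y$. Its numerator-derivative condition vanishes exactly at $x=1/(K+1)$, and at that $x$ the function is constant in $y$.
\item Likewise, $U_1^{C_1}$ is linear-fractional in $x$ and is constant exactly at $y=(3K-4)/[4(K-1)]$.
\item This forces interior indifference values. At $K=4$ they are $x^*=1/5$ and $y^*=2/3$.
\item Substituting gives $\mathcal Z=7/5$ and the unnormalized cells $4/15,\,1/15,\,8/15,\,8/15$. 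This yields \eqref{he:eq:centipede-law}, the conditional probabilities $16/21$ and $1/2$, and the payoffs $(56/21,\,63/21)=(8/3,3)$.
\item Since $8/3>2$, player 1's choice $t^*=1$ is a best response. Since $(8/3,3)$ strictly exceeds $(2,0)$ in both coordinates, the positive HE Pareto-dominates the stopping class.
\end{enumerate}

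\textbf{Main obstacle.} The hardest part is global optimality over joint deviations $(y,\kappa_2)$ and $(x,\kappa_1)$, together with uniqueness of the positive equilibrium. The FOC argument fixes $K$, but a deviator changes $K=\kappa_i\kappa_j$ at the same time as her baseline probability.

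For player 2, I would fix $x=1/5$ and $\kappa_1=2$ and write $U_2^{C_1}$ as a function of $(y,\kappa_2)$. I would show that $\partial U_2/\partial\kappa_2\ge 0$ for every $y\in[0,1]$, which should follow because the $C_2C_3$ cell pays $5$, more than the average. This pins $\kappa_2=2$, after which step 1 above applies. Player 1's problem is handled symmetrically: the $C_2C_3$ cell gives her $3$ against a conditional average of $8/3$, so increasing $K$ helps her as well.

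For uniqueness, I would run a case analysis on $t=1$ profiles. Boundary $x$ or $y$ values lead either to a stopping-type outcome with $U_1^{C_1}\le 2$ or to a profitable deviation. Interior values force the indifference curves above, and monotonicity in $\kappa$ forces $K=4$.
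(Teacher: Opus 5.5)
Your proposal is correct and follows the paper's route. The paper likewise uses the fixed-$K$ interior indifference conditions $x(K)=1/(K+1)$ and $y(K)=(3K-4)/[4(K-1)]$, then monotonicity in the interaction factors to force $\kappa_1^*=\kappa_2^*=2$, then $8/3>2$ to get $t^*=1$; your joint-deviation and uniqueness checks supply details the paper leaves implicit.

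One small correction concerns the stopping class. Player 1 chooses $(t,x,\kappa_1)$ jointly, so the condition is $\max_{x,\kappa_1}U_1^{C_1}\le 2$ given $(y,\kappa_2)$, not the current value $U_1^{C_1}\le 2$. For example, $t=0$, $x=0.9$, $\kappa_1=1$, $y=0.4$, $\kappa_2=2$ gives $U_1^{C_1}\approx1.71$, yet deviating to $t=x=1$, $\kappa_1=2$ yields about $2.18$. Your $y=0$ example is unaffected and suffices for existence.
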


The distinction between baseline and realized behavior is essential: $x^*=1/5$ is not the probability of continuation once the final node is reached; rational inference from the history raises that conditional probability to $1/2$. Observing $C_1$ first filters out player 1's complete plan that stops immediately. Observing $C_2$ then shifts the posterior toward the plan containing $C_3$. Player 1 chooses a complete plan while anticipating this forward inference by player 2.

\subsection{When HE stops}

\subsection*{A large-stakes positive-sum example}

HE does not mechanically select continuation. Consider the same tree with
\begin{equation}\label{he:eq:large-stakes}
 S_1\mapsto(100,0),\quad C_1S_2\mapsto(0,100),\quad
 C_1C_2S_3\mapsto(200,10),\quad C_1C_2C_3\mapsto(40,250). 
\end{equation}
For the continuation block, player 1's final stopping and continuing payoffs are $L=200$ and $c=40$. The interior condition is $y=(Kc-L)/[(K-1)L]$, so a positive pairwise HE requires $Kc>L$. Under the maintained bound $K\le4$, however, $Kc\le160<200$. Therefore the continuation block has no positive HE: player 2 stops after $C_1$, and player 1 strictly prefers the root payoff $100$ to entering. The only HE outcome is the all-stop outcome, $(S_1,S_2,S_3)$. Inactive off-path baseline probabilities and interaction weights can be payoff-irrelevant, so the uniqueness claim is deliberately outcome-wise. Notice that full continuation has the largest total payoff, $290$, yet the interaction technology is not strong enough to overcome player 1's large final stopping temptation.

\subsection*{Constant-sum centipedes}

A two-player centipede is constant sum if every terminal payoff satisfies $u_1(z)+u_2(z)=c$. A technology is \emph{baseline-support preserving} if an action assigned zero baseline probability remains impossible after multiplication by the positive interaction weights; the technology in \eqref{he:eq:root-law}--\eqref{he:eq:pairwise} has this property. The following result is stated only for centipede games.

\begin{corollary}[All-stop outcome in strict constant-sum centipedes]\label{he:corollary:auto-1}
Consider a finite deterministic two-player constant-sum centipede in which backward induction strictly selects $S$ at every decision node. Under a baseline-support-preserving positive interaction-weight technology, every HE follows the backward-induction action at every reached history. Thus the only HE outcome is immediate stopping at node 1.
\end{corollary}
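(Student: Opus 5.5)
I would argue by backward induction on the centipede, from the last decision node toward the root, and prove the stronger statement that in every HE, at every decision node $h$ that has positive root probability, the mover's induced conditional probability of $C$ at $h$ is zero. The all-stop conclusion then follows at node 1. The constant-sum structure supplies the key lever. Since $u_1+u_2=c$, the HE objective of player $i$ equals $c$ minus the opponent's objective. So any change in the joint law that raises one player's expected payoff lowers the other's by exactly the same amount. In particular, interaction weights cannot create the mutual gains that supported continuation in Proposition~\ref{he:proposition:auto-1}.

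The key steps, in order, are as follows. First, fix an HE $\tau^*$ and let $h_n$ be the last node, controlled by $i$. Strict backward induction gives $u_i(S_n)>u_i(C_n)$ there. I would consider the deviation that sets $\sigma_{h_n}(C)=0$ while leaving all other baseline probabilities and all interaction weights unchanged. By baseline-support preservation, every $\omega$ with $\omega_{h_n}=C$ then has zero weight. The root law of the deviation is obtained from $Q^{\varnothing}_{\tau^*}$ by moving mass from configurations $(\omega_{-h_n},C)$ to configurations $(\omega_{-h_n},S)$.

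This reweighting is the delicate part, because the normalization changes and so do the marginal probabilities of earlier actions. What I need is that the deviation does not lower $i$'s payoff on histories that never reach $h_n$, and strictly raises it whenever $h_n$ is reached with positive probability. My plan is to show directly that the probability of reaching $h_n$ and the path-payoffs conditional on stopping earlier can only move in $i$'s favour. This would use the constant-sum identity: the opponent's planned continuation at earlier nodes is what the interaction technology correlates with $C_n$, and removing $C_n$ removes exactly the correlation the opponent exploits against $i$.

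If this global comparison is too hard, I would fall back on the structure actually used in the paper's examples, namely pairwise interaction between successive continuation actions. In that setting I would write the continuation value after reaching $h_{n-1}$ explicitly, as in \eqref{he:eq:continuation}. I would then show that with payoffs $u_2=c-u_1$, the interior first-order conditions cannot hold simultaneously, because the interior condition requires a mutual-gain inequality of the form $Kc>L$ that constant-sum payoffs with a strict backward-induction stop violate. That contradiction gives zero continuation at the reached last node.

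Second, I would induct upward. Once the conditional probability of $C$ at every reached later node is zero, the mover at $h_k$ faces a continuation whose value to her is, by strict backward induction, below her stop payoff. The same deviation argument then forces zero continuation at $h_k$ whenever $h_k$ is reached. At node 1, which is always reached, $S_1$ has conditional probability one, so the only outcome is immediate stopping. For mixed HE, I would apply the argument to each plan in the support, using the fact that each must be a best response.

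I expect the main obstacle to be the first step: controlling how a unilateral baseline deviation at a later node changes the normalized joint law at earlier nodes, since in HE this forward-inference channel is exactly what lets later plans affect reach probabilities. I would first try to bound that effect using the zero-sum form of the objective and the monotonicity of odds ratios in the Gibbs parameterization \eqref{he:eq:pairwise}.
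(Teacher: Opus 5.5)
\textbf{There is a genuine gap in your first step, and your proposed repair has the wrong sign.} Setting $\sigma_{h_n}(C)=0$ with everything else fixed replaces $Q^{\varnothing}_{\tau^*}$ by its conditional law given $\omega_{h_n}=S$ (when $\sigma_{h_n}(S)>0$).

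With positive interaction between $C_{n-1}$ and $C_n$, this conditioning lowers the induced probability that the opponent continued at $h_{n-1}$. In a strict constant-sum centipede, that continuation is good for $i$: strictness for the opponent at $h_{n-1}$ gives $u_i(\ldots C_{n-1}S_n)>u_i(\ldots S_{n-1})$. So the correlation is exploited \emph{by} $i$, not against her. Your deviation raises the probability of a history that never reaches $h_n$ and is bad for $i$.

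The paper's own constant-sum example makes this concrete. From $U_1^{C_1}=[(1-y)+4y(1-x)+3Kxy]/[1+(K-1)xy]$ one gets $U_1^{C_1}(x)-U_1^{C_1}(0)=xy[2K-3-3y(K-1)]/[1+(K-1)xy]$. This is strictly positive for $K=4$, $x>0$, $0<y<5/9$, even though $S_3$ is locally strictly better than $C_3$. So removing $C_n$ can be strictly unprofitable. Your inductive step reuses the same nodewise deviation at $h_k$ and inherits the problem.

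The fallback does not repair this, for two reasons:
\begin{itemize}
\item It covers only the Gibbs pairwise technology, not an arbitrary support-preserving one.
\item Its mechanism is misidentified. In that example player~1's interior condition is satisfiable: it gives $y=5/9$ at $K=4$. After subtracting $1$ from player~1's payoffs, the final stopping and continuing payoffs are $L=3$ and $2$, so $Kc>L$ reads $2K>3$ and holds for $K>3/2$. What actually rules out continuation is player~2's side, $\partial U_1^{C_1}/\partial y>0$.
\end{itemize}

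\textbf{The missing idea} is to use a deviation that the forward-inference channel cannot touch, combined with the value of the zero-sum game. This is the paper's argument.

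A plan with baseline probability one on $S$ at all of one's own nodes confines outcomes by support preservation, whatever the interaction weights:
\begin{itemize}
\item Player~1's pure stopping plan guarantees $v=u_1(S_1)$.
\item Player~2's pure stopping plan confines outcomes to $\{S_1,C_1S_2\}$. This holds player~1 to at most $v$, since strictness gives $u_1(C_1S_2)<u_1(S_1)$.
\end{itemize}
Hence every (mixed) HE pays player~1 exactly $v$.

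Now suppose $C_1$ had positive probability. Then $\sigma_{h_1}(C)>0$ for a positive mass of player~1's plans. Because interaction weights are strictly positive, player~2's switch to the pure stopping plan keeps $\Prb(C_1)>0$. That pushes player~1 strictly below $v$, and hence player~2 strictly above her equilibrium payoff, which is a profitable deviation.

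This settles node~1 directly. Since no later node is then reached, no induction from the last node is needed at all.
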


\begin{proof}
Pure backward-induction strategies are security strategies in a deterministic constant-sum centipede. Positive finite interaction weights reweight only actions already in baseline support, so a player using a pure security plan retains its guarantee in the augmented game. The HE auxiliary game therefore has the ordinary centipede value. Strictness then rules out positive probability on any on-path $C$: the opponent can switch to the corresponding pure stopping/security continuation and make that action yield strictly less than the value. Applying the argument successively along the path leaves only $S_1$ as an outcome. Off-path interaction coordinates may remain immaterial.
\end{proof}

For the three-node constant-sum example $(2,3),(1,4),(4,1),(3,2)$, conditional on $C_1$ one has $U_1^{C_1}=[(1-y)+4y(1-x)+3Kxy]/[1+(K-1)xy]$ and $\partial U_1^{C_1}/\partial y=[3(1-x)+2Kx]/\mathcal Z^2>0$. Player 2, who minimizes player 1's payoff, therefore chooses $y=0$; player 1's root payoff becomes $2-t$, so $t=0$. Outcome-wise, HE selects the same first-stop behavior as backward induction.

\subsection{Experimental direction and interpretation}
\label{subsec:experiments}
The experimental literature broadly confirms the same payoff-sensitive direction. The evidence summarized by \citet[Section 4.1]{ismail2025} reports substantially more cooperation in increasing- or positive-sum centipedes and the most noncooperative, SPNE-like behavior in constant-sum designs. In comparative terms, play converges toward cooperative continuation in positive-sum environments but toward early and often first-node stopping in constant-sum environments. The interested reader may consult \citet{ismail2025}.

In Historical Equilibrium, history is not discarded once a node is reached; instead, it provides evidence about the complete plan that generated play. HE evaluates continuation payoffs backward under a fixed structural law, uses rational inference to interpret the realized past, and uses forward inference to connect that past to future behavior. At equilibrium, complete plans, inferred responses, and interaction weights are mutually optimal.

\clearpage
\section{Cautious Backward Induction}
\label{sec:cbi}

\subsection{Introduction}

In finite perfect-information games, payoff ties can support several pure backward-induction (BI) solutions. We study three maximin restrictions on future contingencies. \emph{Actionwise caution} takes the rectangular closure of locally BI-supported actions. \emph{Recursive caution} starts from that BI rectangle but, whenever maximin itself justifies a new non-BI action, adds it to the contingencies faced by predecessors. \emph{Recursive-consistent caution} instead uses only actions selected by the same cautious rule at later nodes. All three exist by backward recursion. The first two can depart from pure BI; recursive-consistent caution always refines it.

The analysis is restricted to finite deterministic games of perfect information and pure strategies. In this class, the possible outputs of pure backward induction coincide with pure subgame-perfect equilibria; see the standard foundations in \citet{Kuhn1953} and \citet{Selten1975}. For related literature, see, e.g., \citet{Tranaes1998,Bonanno2018,BattigalliDeVito2021,Perea2025}.

\subsection{Setup}
A finite deterministic perfect-information game is
\(
G=(I,H,Z,I(\cdot),(A(h))_{h\notin Z},u),
\)
where $I$ is the finite player set, $H$ is a finite rooted tree, $Z$ its terminal histories, $I(h)\in I$ the mover at nonterminal history $h$, $A(h)$ the finite nonempty action set, and $u_i:Z\to\mathbb R$ player $i$'s payoff. The successor after $a\in A(h)$ is $ha$.

An \emph{action correspondence} $D$ assigns a nonempty set $D(h)\subseteq A(h)$ to every decision history. Its reachable terminal set is
\[
\cZ_D(z)=\{z\},\qquad
\cZ_D(h)=\bigcup_{a\in D(h)}\cZ_D(ha).
\]
For $i=I(h)$, the $D$-security payoff and restricted-maximin recommendation are
\begin{equation}\label{caution:eq:restricted}
 m_i^D(h,a)=\min_{z\in\cZ_D(ha)}u_i(z),
 \qquad
 C_D(h)=\argmax_{a\in A(h)}m_i^D(h,a).
\end{equation}
Let $F$ denote the full action correspondence, $F(h)=A(h)$ at every decision history. Standard maximin is $C_F$ and therefore considers every feasible future contingency; write $\cZ(h)=\cZ_F(h)$ for the full feasible terminal set. The concepts below are instances of the same operator: $\CA=C_B$, $\CR=C_R$, and $\CRC=C_{\CRC}$, with the relevant continuation correspondences defined below.\footnote{A distinct profile-consistent alternative minimizes over complete BI continuation profiles rather than nodewise actions. Its objective for action $a$ is exactly $\ell_i(h,a)$, so its recommendation is $\argmax_a\ell_i(h,a)$, the set of actions attaining $\beta_i(h)$; by \eqref{caution:eq:B}, every such action belongs to $B(h)$. This variant is not studied here.}

\subsection*{Pure backward induction with ties}
Let $\cE(h)$ be the set of terminal outcomes induced by pure BI, equivalently pure subgame-perfect equilibrium, in the subgame rooted at $h$. Put $\cE(z)=\{z\}$. If $i=I(h)$, define
\[
 \ell_i(h,a)=\min_{z\in\cE(ha)}u_i(z),
 \qquad
 \beta_i(h)=\max_{a\in A(h)}\ell_i(h,a).
\]

\begin{theorem}[BI recursion and existence]\label{caution:thm:BI}
For every decision history $h$,
\begin{align}
 \cE(h)
 &=\bigcup_{a\in A(h)}
   \{z\in\cE(ha):u_{I(h)}(z)\geq\beta_{I(h)}(h)\}, \label{caution:eq:BI-outcomes}\\
 B(h)
 &=\{a\in A(h):\exists z\in\cE(ha)\text{ with }
                   u_{I(h)}(z)\geq\beta_{I(h)}(h)\}, \label{caution:eq:B}
\end{align}
where $B(h)$ is the set of actions occurring at $h$ in some pure BI solution of the subgame. Hence $\cE(h)$ and $B(h)$ are nonempty and are obtained uniquely by backward recursion.
\end{theorem}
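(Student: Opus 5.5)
The proof goes by induction on the height of the subgame rooted at $h$, which is well defined because $H$ is a finite rooted tree. The induction hypothesis is that for every immediate successor $ha$ of a decision history $h$, the set $\cE(ha)$ is nonempty and equals the set of terminal outcomes of pure subgame-perfect equilibria of the subgame at $ha$. The base case is $\cE(z)=\{z\}$ at terminal histories. Write $i=I(h)$. A pure SPE of the subgame at $h$ is a pair consisting of an action $a\in A(h)$ and pure SPE profiles $s^{b}$ for every subgame $hb$, $b\in A(h)$. It must satisfy the single condition that $i$ does not gain by switching at $h$:
\[
u_i(z(s^{a}))\ge u_i(z(s^{b}))\quad\text{for all } b\in A(h).
\]
Subgame perfection within each $hb$ is exactly SPE of those subgames, and by the one-shot deviation property for finite perfect-information games, this root comparison is the only additional requirement. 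Therefore an outcome $z$ lies in $\cE(h)$ if and only if there exist $a$ and a choice of outcomes $z_b\in\cE(hb)$ for every $b$, with $z_a=z$, such that $u_i(z)\ge u_i(z_b)$ for all $b$.

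The key step is to show that this existential condition is equivalent to $u_i(z)\ge\beta_i(h)$ with $z\in\cE(ha)$. To prevent deviations as cheaply as possible, the selections at unchosen branches $b\neq a$ should be the \emph{worst} continuation equilibria for the mover, namely some $z_b$ attaining $\ell_i(h,b)=\min_{\cE(hb)}u_i$. These minima exist because each $\cE(hb)$ is finite and, by induction, nonempty.

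For sufficiency, suppose $z\in\cE(ha)$ and $u_i(z)\ge\beta_i(h)$. Choose minimizers $z_b$ for $b\neq a$. Then $u_i(z)\ge\beta_i(h)\ge\ell_i(h,b)=u_i(z_b)$, so no deviation at $h$ is profitable. For necessity, suppose $z\in\cE(h)$ is supported by action $a$ and selections $(z_b)$. Then $u_i(z)\ge u_i(z_b)\ge\ell_i(h,b)$ for every $b\neq a$. Also $u_i(z)\ge\ell_i(h,a)$, since $z\in\cE(ha)$. Taking the maximum over all actions gives $u_i(z)\ge\beta_i(h)$. This proves \eqref{caution:eq:BI-outcomes}. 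The same argument, read at the level of the action chosen at $h$ rather than the outcome, gives \eqref{caution:eq:B}: an action $a$ belongs to $B(h)$ if and only if some $z\in\cE(ha)$ meets the threshold.

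Nonemptiness follows by taking $a^\ast$ that attains $\beta_i(h)$ and $z$ that attains $\ell_i(h,a^\ast)$. Then $u_i(z)=\beta_i(h)$, so $z\in\cE(h)$ and $a^\ast\in B(h)$. Uniqueness by backward recursion holds because the right-hand sides depend only on the successor sets $\cE(ha)$.

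The main obstacle is the necessity direction, together with the justification that the SPE condition at $h$ reduces to the single root comparison with \emph{independently} chosen continuation equilibria. This step needs two facts: the subgames $hb$ are disjoint, so equilibrium selections in different branches can be combined freely; and SPE of the subgame at $h$ is equivalent to root optimality plus SPE in each child subgame. I would state these explicitly, invoking the finite-horizon one-shot deviation principle cited from \citet{Kuhn1953,Selten1975}. The remaining steps are routine manipulations of minima and maxima.
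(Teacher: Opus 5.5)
Your proof is correct and follows essentially the same route as the paper's: induction on subgame height, necessity via $u_i(z)\ge u_i(z_b)\ge\ell_i(h,b)$ for every branch $b$, and sufficiency by attaching at each alternative branch a BI continuation that is worst for the mover. The only difference is that you spell out the decomposition of a pure SPE at $h$ into root optimality plus SPE in each child subgame, which the paper uses implicitly.
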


\begin{proof}
Let $i=I(h)$. Proceed by induction on subgame height. If a BI profile chooses $a$ at $h$ and induces $z$, its continuation in every branch $b$ induces some $z_b\in\cE(hb)$; optimality at $h$ implies $u_i(z)\geq u_i(z_b)\geq\min_{y\in\cE(hb)}u_i(y)$, hence $u_i(z)\geq\beta_i(h)$. Conversely, if $z\in\cE(ha)$ meets this threshold, attach after each alternative $b$ a BI continuation attaining $\min_{y\in\cE(hb)}u_i(y)$. Choosing $a$ is then optimal at $h$, yielding a BI profile. Nonemptiness follows because a branch attaining $\beta_i(h)$ has at least one continuation meeting the threshold.
\end{proof}

\subsection{Three cautious rules}
The \emph{BI path closure} $\cZ_B(h)$ contains every outcome obtainable when, at each reached node $x$, the mover independently chooses any action in $B(x)$. Such a path need not be generated by one BI profile: it may switch among local actions justified by different BI solutions.

\begin{definition}[Actionwise caution]\label{caution:def:A}
At a decision history $h$ of player $i$,
\begin{equation}\label{caution:eq:A}
 \CA(h)=\argmax_{a\in A(h)}\min_{z\in\cZ_B(ha)}u_i(z).
\end{equation}
Thus the future contingency set is the fixed rectangular closure of BI-supported actions.
\end{definition}

\begin{definition}[Recursive caution]\label{caution:def:R}
Define $R$ and $\CR$ from the leaves toward the root. Once $R$ is known at all strict descendants of $h$, let
\begin{equation}\label{caution:eq:R}
 \CR(h)=\argmax_{a\in A(h)}\min_{z\in\cZ_R(ha)}u_{I(h)}(z),
 \qquad
 R(h)=B(h)\cup\CR(h).
\end{equation}
Thus BI-supported actions seed the recursion, and any additional action justified by cautious maximin becomes a possible contingency for every predecessor.
\end{definition}

\begin{definition}[Recursive-consistent caution]\label{caution:def:RC}
Put $\cZ_{\CRC}(z)=\{z\}$ at terminal histories. Once $\cZ_{\CRC}$ is known at the successors of $h$, let $i=I(h)$ and set
\begin{equation}\label{caution:eq:RC}
 \CRC(h)=\argmax_{a\in A(h)}\min_{z\in\cZ_{\CRC}(ha)}u_i(z),
 \qquad
 \cZ_{\CRC}(h)=\bigcup_{a\in\CRC(h)}\cZ_{\CRC}(ha).
\end{equation}
At a last decision node this is ordinary best response. Earlier movers maximize the worst payoff generated when all later movers use the same rule.
\end{definition}

The three rules differ only in what remains possible downstream. Actionwise caution keeps the BI rectangle fixed. Recursive caution can expand it by propagating newly cautious actions backward. Recursive-consistent caution instead prunes it by retaining only actions selected by its own recursion.

\begin{theorem}[Existence, nesting, and refinement]\label{caution:thm:existence}
At every decision history, $\CA(h)$, $\CR(h)$, and $\CRC(h)$ are nonempty. Equation~\eqref{caution:eq:R} determines a unique correspondence $R$, while \eqref{caution:eq:RC} determines $\CRC$ uniquely, in each case by backward induction on subgame height. Moreover,
\[
 \cZ_{\CRC}(h)\subseteq\cE(h)\subseteq\cZ_B(h)\subseteq\cZ_R(h)\subseteq\cZ(h),
 \qquad
 \CRC(h)\subseteq B(h).
\]
Hence recursive-consistent caution always refines pure BI. Section~\ref{caution:sec:nested-examples} shows that actionwise and recursive caution need not.
\end{theorem}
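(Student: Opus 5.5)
Everything goes by induction on subgame height, using Theorem~\ref{caution:thm:BI} for $B$ and $\cE$. Nonemptiness is immediate: $A(h)$ is finite and nonempty. At any decision node the sets $\cZ_B(ha)$, $\cZ_R(ha)$ and $\cZ_{\CRC}(ha)$ are nonempty whenever the correspondence is nonempty at all descendants. This holds for $B$ by Theorem~\ref{caution:thm:BI}, and for $R\supseteq B$ and $\CRC$ by the induction hypothesis. So each inner minimum is a finite real number, and an $\argmax$ over a finite nonempty set is nonempty. Uniqueness is also immediate. $\CR(h)$ and $R(h)$ depend only on $R$ at strict descendants, through $\cZ_R(ha)$. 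Likewise $\CRC(h)$ depends only on $\cZ_{\CRC}$ at successors. Hence the leaf-to-root recursion defines each object exactly once.

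For the chain of inclusions, three of the four links are easy. The inclusion $\cZ_R(h)\subseteq\cZ(h)$ is trivial, since $R(x)\subseteq A(x)$. The inclusion $\cZ_B(h)\subseteq\cZ_R(h)$ follows from $B(x)\subseteq R(x)$ at every node, together with monotonicity of $\cZ_D$ in $D$, which I prove by a one-line induction. For $\cE(h)\subseteq\cZ_B(h)$, take $z\in\cE(h)$ induced by a BI profile that plays $a$ at $h$. Then $a\in B(h)$, the continuation is a BI profile of the subgame at $ha$, and so $z\in\cE(ha)$. By induction $z\in\cZ_B(ha)\subseteq\cZ_B(h)$. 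Equivalently, use \eqref{caution:eq:BI-outcomes} together with the observation that any $a$ having some $z\in\cE(ha)$ above the threshold lies in $B(h)$ by \eqref{caution:eq:B}.

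The substantive step is $\cZ_{\CRC}(h)\subseteq\cE(h)$ together with $\CRC(h)\subseteq B(h)$, and I would prove the two jointly by induction. Assume $\cZ_{\CRC}(hb)\subseteq\cE(hb)$ for every $b$, and let $i=I(h)$. The first step is to show that $\min_{z\in\cZ_{\CRC}(hb)}u_i(z)\ge\ell_i(h,b)$ for each $b$, because a minimum over a subset is at least the minimum over the larger set. It follows that the maximin value $v:=\max_b\min_{\cZ_{\CRC}(hb)}u_i$ satisfies $v\ge\beta_i(h)$. Now take $a\in\CRC(h)$. Every $z\in\cZ_{\CRC}(ha)$ lies in $\cE(ha)$ and has $u_i(z)\ge v\ge\beta_i(h)$. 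By \eqref{caution:eq:BI-outcomes} each such $z$ therefore lies in $\cE(h)$, and by \eqref{caution:eq:B}, using nonemptiness of $\cZ_{\CRC}(ha)$, we get $a\in B(h)$. Taking the union over $a\in\CRC(h)$ gives $\cZ_{\CRC}(h)\subseteq\cE(h)$. The base case at terminal histories is $\{z\}=\{z\}$. The main obstacle is setting up this induction cleanly. The key point is that pruning the continuation sets can only raise security values, so every $\CRC$-optimal branch clears the BI threshold $\beta_i(h)$ with all of its retained outcomes. Given the statement of Theorem~\ref{caution:thm:BI}, this is exactly what membership in $\cE$ requires.
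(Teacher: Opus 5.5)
Your proof is correct and follows the paper's argument. Both use the same induction on subgame height. The key step in both is that every outcome of a $\CRC$-optimal branch is a BI outcome of the successor subgame and clears the mover's threshold, because pruning to $\cZ_{\CRC}(hb)\subseteq\cE(hb)$ can only raise security values.

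The only difference is in packaging. You apply the recursion \eqref{caution:eq:BI-outcomes}--\eqref{caution:eq:B} through the bound $\max_b\min_{z\in\cZ_{\CRC}(hb)}u_i(z)\ge\beta_i(h)$. The paper instead redoes that construction directly, attaching BI continuations that induce minimizers $z_b\in\argmin_{y\in\cZ_{\CRC}(hb)}u_i(y)$.
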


\begin{proof}
Finiteness gives nonempty minima and argmax sets, and both recursive definitions depend only on strict descendants. Since $B(h)\subseteq R(h)$, $\cZ_B(h)\subseteq\cZ_R(h)$; every BI outcome follows BI-supported actions, so $\cE(h)\subseteq\cZ_B(h)$. For recursive-consistent caution, use induction on subgame height. Take $a^*\in\CRC(h)$ and $z\in\cZ_{\CRC}(ha^*)$. For each alternative $b$, choose $z_b\in\argmin_{y\in\cZ_{\CRC}(hb)}u_i(y)$, where $i=I(h)$. By induction, $z$ and every $z_b$ are BI outcomes of their successor subgames, while \eqref{caution:eq:RC} implies $u_i(z)\ge u_i(z_b)$ for all $b$. Attaching BI continuations inducing these outcomes makes $a^*$ optimal at $h$. Thus $z\in\cE(h)$ and $a^*\in B(h)$.
\end{proof}

\begin{proposition}[Useful equivalences]\label{caution:prop:equiv}
If $R(x)=B(x)$ at every strict descendant $x$ of $h$, then $\CR(h)=\CA(h)$. If $\cZ_{\CRC}(ha)=\cZ_B(ha)$ for every $a\in A(h)$, then $\CRC(h)=\CA(h)$. In particular, if $B(x)$ is a singleton at every node of a subgame, then
\[
 \CA(x)=\CR(x)=\CRC(x)=B(x)
\]
throughout that subgame.
\end{proposition}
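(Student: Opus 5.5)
The statement has three parts, and all of them follow by direct comparison of the defining formulas. The key observation is that each of $\CA$, $\CR$, $\CRC$ is the operator $C_D$ from \eqref{caution:eq:restricted}, applied with a different continuation correspondence. At a fixed history $h$, $C_D(h)$ depends on $D$ only through the family of sets $\cZ_D(ha)$, $a\in A(h)$. So it suffices to show that, under each hypothesis, the relevant terminal sets coincide branch by branch. The argmax over $A(h)$ is then literally the same set.

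\textbf{First claim.} Assume $R(x)=B(x)$ at every strict descendant $x$ of $h$. I will show, by induction on subgame height over the descendants $ha$ and their subtrees, that $\cZ_R(x)=\cZ_B(x)$ for every strict descendant $x$ of $h$.
\begin{itemize}
\item At terminal histories both sets equal $\{z\}$.
\item At a decision history $x$ strictly below $h$, the recursive definition gives $\cZ_R(x)=\bigcup_{a\in R(x)}\cZ_R(xa)$. Replace $R(x)$ by $B(x)$ and apply the induction hypothesis to each $xa$; this gives $\cZ_B(x)$.
\end{itemize}
In particular $\cZ_R(ha)=\cZ_B(ha)$ for every $a\in A(h)$. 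Substituting into \eqref{caution:eq:R} and \eqref{caution:eq:A} yields $\CR(h)=\CA(h)$. Only the values of $R$ at strict descendants enter, so nothing is needed at $h$ itself.

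\textbf{Second claim.} This is immediate. Definition~\ref{caution:def:RC} writes $\CRC(h)$ as the argmax of $\min_{z\in\cZ_{\CRC}(ha)}u_i(z)$. The hypothesis $\cZ_{\CRC}(ha)=\cZ_B(ha)$ turns this objective into the one in \eqref{caution:eq:A}.

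\textbf{Singleton case.} Suppose $B(x)=\{b(x)\}$ at every node of the subgame. Then $\cZ_B(x)$ is a single outcome, namely the one reached by following $b$. By Theorem~\ref{caution:thm:existence}, $\emptyset\ne\cZ_{\CRC}(x)\subseteq\cE(x)\subseteq\cZ_B(x)$, so all three sets equal that singleton. Also $\emptyset\ne\CRC(x)\subseteq B(x)$ forces $\CRC(x)=B(x)$. The second claim then gives $\CA(x)=\CRC(x)=B(x)$ at every node.

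For $\CR$, I argue by induction on height inside the subgame that $\CR(x)\subseteq B(x)$, hence $R(x)=B(x)$.
\begin{itemize}
\item At a node whose strict descendants all satisfy $R=B$, the first claim gives $\CR(x)=\CA(x)=B(x)$.
\item Therefore $R(x)=B(x)\cup\CR(x)=B(x)$.
\end{itemize}
This propagates to the root of the subgame, yielding $\CR(x)=B(x)$ throughout.

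\textbf{Main obstacle.} The only delicate point is the order of the induction in the singleton case. The first claim needs $R=B$ at all strict descendants before it can be applied at $x$. The induction therefore has to be organized by height, and the singleton hypothesis must hold at every node of the subgame, not merely at $x$. The rest is bookkeeping.
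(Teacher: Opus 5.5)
Your proposal is correct and follows the paper's argument. In both, the first two claims reduce to equality of the branchwise terminal sets $\cZ_D(ha)$, and the singleton case is handled by backward induction using the inclusions $\cZ_{\CRC}\subseteq\cE\subseteq\cZ_B$ and $\CRC\subseteq B$ from Theorem~\ref{caution:thm:existence}; you also spell out two steps the paper leaves implicit, namely the induction giving $\cZ_R=\cZ_B$ and the height induction showing that $R=B$ propagates upward.
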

\begin{proof}
Under either hypothesis the relevant branchwise minima are identical. If all BI action sets are singletons, the unique BI continuation is the only path retained by each rule; Theorem~\ref{caution:thm:existence} then gives the claim by backward induction.
\end{proof}

\subsection{Nested examples}\label{caution:sec:nested-examples}
Let $G_k$ denote the subgame with $k$ remaining decision nodes for $k=2,3,4,5$. In Figure~\ref{caution:fig:five-node}, $G_5$, $G_4$, $G_3$, and $G_2$ start at $h_0$, $h_1$, $h_2$, and $h_3$, respectively; $h_4$ is the last decision node. Action strings are written from the root of the relevant subgame toward the terminal node.

\paragraph{The two-node subgame $G_2$.}
At $h_4$, Player 2 is indifferent: $S$ and $C$ both give payoff $3$, so
\[
 B(h_4)=\CA(h_4)=\CR(h_4)=\CRC(h_4)=\{S,C\},\qquad R(h_4)=B(h_4).
\]
The pure BI profiles of $G_2$, rooted at $h_3$, are $SS$ and $CC$, hence $B(h_3)=\{S,C\}$. At $h_3$, Player 1 obtains $2$ from stopping, whereas continuing has floor $\min\{0,3\}=0$. Therefore
\[
 \CA(h_3)=\CR(h_3)=\CRC(h_3)=\{S\}.
\]
Since $S\in B(h_3)$, recursive caution adds nothing new there: $R(h_3)=B(h_3)$.

\paragraph{The three-node subgame $G_3$.}
Its pure BI profiles are $SSS$ and $CCC$, so $B(h_2)=\{S,C\}$. Under actionwise caution, continuing at $h_2$ can give Player 2 payoffs $1,3,3$, whose floor $1$ is below the stopping payoff $2$. Since $R=B$ below $h_2$, recursive caution has the same calculation. Recursive-consistent caution has already selected $S$ at $h_3$, so continuing also gives Player 2 only $1$. Hence
\[
 \CA(h_2)=\CR(h_2)=\CRC(h_2)=\{S\},\qquad R(h_2)=B(h_2).
\]

\begin{figure}[t]
\centering
\resizebox{0.98\textwidth}{!}{%
\begin{tikzpicture}[
    x=1cm,y=1cm,
    every node/.style={font=\small},
    p1/.style={circle,draw=red!75!black,fill=red!75,inner sep=2.2pt,text=white},
    p2/.style={circle,draw=blue!75!black,fill=blue!75,inner sep=2.2pt,text=white},
    edge/.style={line width=.8pt},
    action1/.style={text=red!75!black,font=\small\bfseries},
    action2/.style={text=blue!75!black,font=\small\bfseries}
]
\node[p2,label=above:{$h_0$}] (h0) at (0,0) {2};
\node (z0) at (-2,-1.5) {$z_0=(0,\tfrac12)$};
\node[p1,label=above right:{$h_1$}] (h1) at (2,-1.5) {1};
\node (z1) at (0,-3.0) {$z_1=(\tfrac52,0)$};
\node[p2,label=above right:{$h_2$}] (h2) at (4,-3.0) {2};
\node (z2) at (2,-4.5) {$z_2=(3,2)$};
\node[p1,label=above right:{$h_3$}] (h3) at (6,-4.5) {1};
\node (z3) at (4,-6.0) {$z_3=(2,1)$};
\node[p2,label=above right:{$h_4$}] (h4) at (8,-6.0) {2};
\node (z4) at (6.8,-7.5) {$z_4=(0,3)$};
\node (z5) at (9.2,-7.5) {$z_5=(3,3)$};

\draw[edge] (h0)--node[action2,above left]{$S$}(z0);
\draw[edge] (h0)--node[action2,above right]{$C$}(h1);
\draw[edge] (h1)--node[action1,above left]{$S$}(z1);
\draw[edge] (h1)--node[action1,above right]{$C$}(h2);
\draw[edge] (h2)--node[action2,above left]{$S$}(z2);
\draw[edge] (h2)--node[action2,above right]{$C$}(h3);
\draw[edge] (h3)--node[action1,above left]{$S$}(z3);
\draw[edge] (h3)--node[action1,above right]{$C$}(h4);
\draw[edge] (h4)--node[action2,above left]{$S$}(z4);
\draw[edge] (h4)--node[action2,above right]{$C$}(z5);
\end{tikzpicture}}
\caption{The five-node stop--continue game. The nested games $G_5$, $G_4$, $G_3$, and $G_2$ start at $h_0$, $h_1$, $h_2$, and $h_3$, respectively; $h_4$ is the last decision node.}
\label{caution:fig:five-node}
\end{figure}

\paragraph{The four-node subgame $G_4$.}
The only pure BI profiles of $G_4$, rooted at $h_1$, are $CSSS$ and $CCCC$, so $B(h_1)=\{C\}$. Actionwise caution nevertheless permits every path assembled from $B(h_2)=B(h_3)=B(h_4)=\{S,C\}$ after initial $C$. The reachable outcomes are $(3,2),(2,1),(0,3),(3,3)$, whose Player-1 payoffs have minimum $0$. Since stopping gives $5/2$,
\[
 \CA(h_1)=\{S\}.
\]
The harmful path $CCCS$ is not one BI profile; it combines locally BI-supported actions under different continuation rationales. Because recursive caution has not yet added any non-BI action at a strict descendant, $R=B$ below $h_1$ and therefore
\[
 \CR(h_1)=\CA(h_1)=\{S\}.
\]
Now recursion matters: $S\notin B(h_1)$, so $R(h_1)=B(h_1)\cup\CR(h_1)=\{C,S\}$ for predecessors.

Recursive-consistent caution behaves differently. It has already selected $S$ at $h_3$ and $h_2$. Thus continuing at $h_1$ leads to $(3,2)$ and gives Player 1 payoff $3>5/2$, so
\[
 \CRC(h_1)=\{C\}=B(h_1).
\]
Thus $G_4$ separates the BI-rectangular rules from the dynamically consistent one.

\paragraph{The five-node game $G_5$.}
At $h_0$, actionwise caution still uses only $B(h_1)=\{C\}$. Conditional on entering, Player 2 can receive $2,1,3,$ or $3$, so its floor is $1>1/2$ and
\[
 \CA(h_0)=\{C\}.
\]
Recursive caution instead uses $R(h_1)=\{S,C\}$. If Player 1 chooses the newly cautious $S$ at $h_1$, Player 2 receives $0$, so the recursive floor from entering is $0<1/2$ and
\[
 \CR(h_0)=\{S\}.
\]
Recursive-consistent caution has $\CRC(h_1)=\{C\}$ and $\CRC(h_2)=\{S\}$, so entering leads to $(3,2)$ and gives Player 2 payoff $2>1/2$; hence
\[
 \CRC(h_0)=\{C\}.
\]
The fifth node therefore separates recursive caution from both actionwise and recursive-consistent caution.

\begin{center}
\small
\begin{tabular}{ccllll}
\toprule
Game & Root & Pure BI profiles & $\CA$ & $\CR$ & $\CRC$\\
\midrule
$G_2$ & $h_3$ & $SS,CC$ & $S$ & $S$ & $S$\\
$G_3$ & $h_2$ & $SSS,CCC$ & $S$ & $S$ & $S$\\
$G_4$ & $h_1$ & $CSSS,CCCC$ & $S$ & $S$ & $C$\\
$G_5$ & $h_0$ & $CCSSS,CCCCC$ & $C$ & $S$ & $C$\\
\bottomrule
\end{tabular}
\end{center}

\begin{proposition}[Refinement and separation]
Actionwise caution and recursive caution need not refine pure BI. Recursive-consistent caution always refines pure BI. The three recommendations can differ pairwise across games.
\end{proposition}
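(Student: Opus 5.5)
The proposition has three parts, and I would prove each by citing results already established rather than building new arguments. The positive refinement claim is immediate from Theorem~\ref{caution:thm:existence}. That theorem gives $\CRC(h)\subseteq B(h)$ and $\cZ_{\CRC}(h)\subseteq\cE(h)$ at every decision history. So every action selected by recursive-consistent caution occurs in some pure BI solution of the subgame, and every outcome it generates is a pure BI outcome.

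For the failure of refinement under the other two rules, I would use the nested examples of Section~\ref{caution:sec:nested-examples}.
\begin{itemize}
\item \emph{Actionwise caution, subgame $G_4$ rooted at $h_1$.} The pure BI profiles are $CSSS$ and $CCCC$, so $B(h_1)=\{C\}$.
\item Under $\CA$, the rectangle built from $B(h_2)=B(h_3)=B(h_4)=\{S,C\}$ lets continuation reach $z_4=(0,3)$ via $CCCS$. The floor from continuing is therefore $0<5/2$, which gives $\CA(h_1)=\{S\}\not\subseteq B(h_1)$.
\item \emph{Recursive caution, same subgame.} No non-BI action has been added below $h_1$, so $R=B$ at all strict descendants. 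Proposition~\ref{caution:prop:equiv} then gives $\CR(h_1)=\CA(h_1)=\{S\}$, which again leaves $B(h_1)$.
\item I would briefly recheck the BI computation of $B(h_1)$. Player~1 at $h_1$ gets at least $3$ by continuing under either BI continuation, and $3>5/2$, so Theorem~\ref{caution:thm:BI} confirms $S\notin B(h_1)$.
\end{itemize}

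For pairwise separation, I would read off the table, verifying each entry from the node-by-node calculations already in the text.
\begin{itemize}
\item $\CA$ and $\CRC$ differ in $G_4$ at $h_1$: $\{S\}$ versus $\{C\}$.
\item $\CR$ and $\CRC$ differ in $G_4$, and also in $G_5$ at $h_0$: $\{S\}$ versus $\{C\}$.
\item $\CA$ and $\CR$ differ in $G_5$ at $h_0$. This holds because $R(h_1)=\{S,C\}$ exposes Player~2 to $z_1=(5/2,0)$, giving floor $0<1/2$. The rectangle $\cZ_B(h_0C)$ instead has floor $1>1/2$.
\end{itemize}
Since each pair of rules disagrees in some game, the three recommendations can differ pairwise.

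The main obstacle is bookkeeping rather than conceptual. It lies in the $G_5$ computation of $\cZ_B(h_1)$, which must exclude $S$ at $h_1$ because $B(h_1)=\{C\}$. It also requires checking that the reachable Player-2 payoffs after entering are exactly $\{2,1,3,3\}$. I would verify these sets explicitly by enumerating the paths.
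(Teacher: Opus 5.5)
Your proposal is correct and follows the paper's proof: the refinement of $\CRC$ comes from Theorem~\ref{caution:thm:existence}, $G_4$ at $h_1$ shows that neither $\CA$ nor $\CR$ refines BI, and $G_4$ and $G_5$ together separate the three rules pairwise. Your extra checks are consistent with the paper's node-by-node calculations: $\beta_1(h_1)=3$, the use of Proposition~\ref{caution:prop:equiv} to get $\CR(h_1)=\CA(h_1)$, and the floors $1$ versus $0$ at $h_0$.
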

\begin{proof}
In $G_4$, $B(h_1)=\{C\}$ while $\CA(h_1)=\CR(h_1)=\{S\}$ and $\CRC(h_1)=\{C\}$. In $G_5$, $\CA(h_0)=\CRC(h_0)=\{C\}$ whereas $\CR(h_0)=\{S\}$. The general refinement of $\CRC$ is Theorem~\ref{caution:thm:existence}.
\end{proof}

Why not use unrestricted maximin? Suppose Player 1 chooses $S\to(1,0)$ or $C$ leading to Player 2, who chooses $L\to(2,2)$ or $R\to(-1,0)$. Full maximin chooses $S$ because it includes $R$. But $R$ is strictly suboptimal at Player 2's node relative to $L$, so $B=\{L\}$ there; all three rationality-restricted rules choose $C$. The distinctive step is therefore the rationality restriction on contingencies, not maximin itself.

\subsection{Interpretation and further properties}
A hybrid path admitted by actionwise caution is not one BI profile: each action has a local BI justification, but the justification may change with history. Recursive caution keeps this rectangular BI baseline and then propagates any newly maximin-justified non-BI action backward. Recursive-consistent caution imposes a different principle: every future node uses the same cautious rule, so actions not selected by that recursion are pruned from predecessors' contingency sets. This is why $CCCS$ matters for actionwise and recursive caution in $G_4$ but not for recursive-consistent caution.

\subsection{Conclusion}
Cautious backward reasoning is maximin over an endogenous continuation set. Actionwise caution fixes the rectangular closure of BI-supported actions. Recursive caution starts from that rectangle and expands it whenever caution itself justifies a new action. Recursive-consistent caution instead uses a single self-contained rule at every node and therefore prunes future contingencies; it always refines pure BI. In the nested games, all three agree in $G_2$ and $G_3$; $G_4$ separates recursive-consistent caution from the two BI-rectangular rules, while $G_5$ separates recursive caution from the other two.

\clearpage
\section{Simple Nash Equilibrium: Common Behavioral Rules in Extensive-Form Games}
\label{sec:sne}

\subsection{Introduction}

A behavioral strategy assigns a separate lottery to every information set. We study a restriction: a player chooses one lottery over a fixed set of action-coded modes and reuses that lottery at every own information set. A Nash equilibrium in these common lotteries is a \emph{Simple Nash Equilibrium} (SNE). A mixed SNE always exists. We apply SNE to centipede game and finitely repeated prisoner's dilemma.

The related literature includes \citet{Rubinstein1986,AbreuRubinstein1988,Wichardt2008}.

\subsection{Model}

Let $\Gamma$ be a finite extensive-form game with players $N$, terminal histories $Z$, information sets $\I_i$, feasible actions $A(h)$ at $h\in\I_i$, chance-move probabilities, and terminal utilities $u_i:Z\to\R$. The game has perfect recall.

A \emph{behavioral code} for player $i$ consists of a finite mode set $M_i$ and, for every $h\in\I_i$, a map
\[
\alpha_{ih}:M_i\to A(h).
\]
The same mode labels are available at all of player $i$'s information sets, although their action meanings may depend on the node. Let
\[
X_i=\Delta(M_i).
\]
Given $x_i\in X_i$, the induced behavioral strategy is
\begin{equation}
\sigma_i^{x_i}(a\mid h)
=\sum_{m\in M_i}x_i(m)\,\one\{\alpha_{ih}(m)=a\}.
\label{eq:simplebehavior}
\end{equation}
A fresh mode is drawn independently whenever player $i$ moves, but its distribution $x_i$ is identical at every information set. Thus the restriction concerns the behavioral lottery.

The induced payoff is
\begin{equation}
U_i(x)=\sum_{z\in Z}\Pr_{\Gamma}(z\mid \sigma^x)u_i(z),
\qquad x\in X\equiv\prod_j X_j.
\label{eq:payoff}
\end{equation}
The associated \emph{simple game} is the strategic-form game $(N,(X_i)_i,(U_i)_i)$.

\begin{definition}[Simple Nash equilibrium]
A profile $x^*\in X$ is a \emph{Simple Nash Equilibrium} if, for every player $i$,
\[
U_i(x_i^*,x_{-i}^*)\ge U_i(x_i,x_{-i}^*)
\qquad\forall x_i\in X_i.
\]
\end{definition}

SNE is therefore ordinary Nash equilibrium in the induced game over common behavioral rules. If $S_i^\alpha=\{\sigma_i^{x_i}:x_i\in X_i\}$ denotes the induced restricted behavioral-strategy set, then an SNE is exactly a profile with no profitable deviation inside $S_i^\alpha$. Hence
\[
\operatorname{NE}(\Gamma)\cap\prod_iS_i^\alpha
\subseteq \operatorname{SNE}(\Gamma,\alpha),
\]
and the inclusion can be strict.

A useful binary case has two modes, $F$ and $O$. Mode $F$ follows a reference rule $r_i$, while $O$ follows its opposite. Then one number $p_i=\Pr(F)$ governs all of player $i$'s nodes. Literal common mixing is the special case in which $F$ and $O$ correspond to the same two action labels at every node. In repeated games, $F$ may instead represent a history-dependent rule such as Grim trigger.

\subsection{Structure and existence}

Let $d_i(z)$ be the number of player $i$'s decision nodes on terminal history $z$, and $d_i=\max_z d_i(z)$.

\begin{proposition}
For every player $j$, $U_j(x)$ is a multivariate polynomial, with degree in $x_i$ at most $d_i$.
\end{proposition}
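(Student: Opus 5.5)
The plan is to expand the outcome distribution in \eqref{eq:payoff} terminal by terminal and read off degrees from the path structure. Under perfect recall, the probability of a terminal history $z$ under the behavioral profile $\sigma^x$ factors along the unique path to $z$. It is the product of the chance-move probabilities on that path with $\sigma_k^{x_k}(a\mid h)$ for every decision node on the path, where $h$ is the mover's information set at that node and $a$ is the action taken there. Writing $\Pr_\Gamma(z\mid\sigma^x)$ in this product form is the first step.

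Next, by \eqref{eq:simplebehavior}, each factor $\sigma_k^{x_k}(a\mid h)=\sum_{m}x_k(m)\one\{\alpha_{kh}(m)=a\}$ is a linear form in the coordinates of $x_k$ alone. It has constant coefficients in $\{0,1\}$ and no terms in other players' variables. Chance factors are constants. Hence $\Pr_\Gamma(z\mid\sigma^x)$ is a product of constants and linear forms, and is therefore a polynomial in all coordinates of $x$.

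For the degree count, the factors involving $x_i$ are exactly those at player $i$'s decision nodes on $z$. There are $d_i(z)$ of them, each of degree one in $x_i$. So the degree of $\Pr_\Gamma(z\mid\sigma^x)$ in $x_i$ is at most $d_i(z)\le d_i$; it can be strictly smaller because of cancellation or identically zero linear forms. Finally, $U_j(x)$ is a finite linear combination over $z\in Z$ with real coefficients $u_j(z)$. Polynomials are closed under such combinations, and the degree in $x_i$ of a sum is at most the maximum of the summands' degrees. This gives degree at most $d_i$ for every $j$.

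The only delicate point is the first step. I must check that the path probability really is the product over nodes on the path, with the same lottery evaluated at each visit. This holds because a fresh mode is drawn independently at every move, as stated in the model. Under perfect recall, a path visits each information set at most once, though this is not even needed for the count: the bound is by nodes, not information sets. One remark worth adding: if one wishes, the simplex constraint $\sum_m x_i(m)=1$ could be used to eliminate a variable, but this substitution is linear and does not raise the degree, so the statement holds in either parameterization.
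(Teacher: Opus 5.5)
Your proposal is correct and follows the paper's proof: you factor each terminal history's realization probability along its path, use \eqref{eq:simplebehavior} to see that each behavioral factor is linear in the relevant $x_k$, count the at most $d_i(z)\le d_i$ factors contributed by player $i$, and sum over $Z$. Your added remarks, that the count is by nodes rather than information sets and that the simplex constraint does not raise the degree, are harmless refinements of the same argument.
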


\begin{proof}
The realization probability of each terminal history is a product of chance probabilities and behavioral action probabilities. By \eqref{eq:simplebehavior}, each behavioral probability is linear in the relevant common lottery. Player $i$ contributes at most $d_i(z)$ such factors along history $z$. Summing over terminal histories preserves the degree bound.
\end{proof}

The polynomial structure distinguishes SNE from unrestricted behavioral optimization, where payoff is multi-affine in separate information-set probabilities. Identifying those probabilities can destroy quasiconcavity and therefore pure existence.

The main behavioral implication is easiest to see in a binary code. Let $b_{ih}$ be the unrestricted probability of following the reference action at information set $h$, and let $\widetilde U_i((b_{jk})_{j,k})$ be expected payoff in the full behavioral coordinates. Under SNE, $b_{ih}=p_i$ for every $h\in\I_i$.

\begin{proposition}
At every differentiability point,
\begin{equation}
\frac{\partial U_i}{\partial p_i}
=\sum_{h\in\I_i}
\frac{\partial \widetilde U_i}{\partial b_{ih}}.
\label{eq:aggregation}
\end{equation}
Hence an interior SNE may satisfy $\partial U_i/\partial p_i=0$ even though the individual nodewise marginal incentives are nonzero and have opposite signs.
\end{proposition}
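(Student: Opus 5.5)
The plan is to read the identity \eqref{eq:aggregation} as a chain-rule statement. The restricted payoff is the full behavioral payoff evaluated on the diagonal $b_{ih}=p_i$ for all $h\in\I_i$. Concretely, define the map $\phi:p\mapsto (b_{jk})$ with $b_{jk}=p_j$ for every $j$ and every $k\in\I_j$. We then have $U_i(p)=\widetilde U_i(\phi(p))$; this holds because \eqref{eq:simplebehavior} in the binary code sets the probability of the reference action at every $h\in\I_i$ equal to $x_i(F)=p_i$.

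First, I will record that $\widetilde U_i$ is a polynomial in the coordinates $(b_{jk})$. By the same realization-probability argument as in the previous proposition, each terminal probability is a product of chance probabilities and nodewise behavioral probabilities. Under perfect recall, each information set contributes at most one factor per history, so $\widetilde U_i$ is in fact multi-affine. It is therefore differentiable everywhere on the relevant box, and "at every differentiability point" only matters if one extends to boundary or kinked parameterizations. Second, $\phi$ is affine, and $\partial b_{jk}/\partial p_i=\one\{j=i\}$. The multivariate chain rule then gives $\partial U_i/\partial p_i=\sum_{j,k}(\partial\widetilde U_i/\partial b_{jk})\,\one\{j=i\}=\sum_{h\in\I_i}\partial\widetilde U_i/\partial b_{ih}$, with the partial derivatives of $\widetilde U_i$ evaluated at $\phi(p)$. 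This is \eqref{eq:aggregation}.

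For the "hence" clause, I will argue as follows. At an interior SNE $p^*$, player $i$'s objective $U_i(\cdot,p_{-i}^*)$ attains an interior maximum on $[0,1]$, so the first-order condition $\partial U_i/\partial p_i=0$ holds. By \eqref{eq:aggregation}, this forces only the sum of the nodewise marginals to vanish, not each term. To show that the individual terms can indeed be nonzero with opposite signs, I would exhibit a small example. The simplest is a one-player game with two sequential own nodes, the first with continuation, where following the reference action is strictly good at one node and strictly bad at the other. For instance, take payoffs such that $U(p)=p(1-p)\cdot c+\dots$, giving an interior maximizer $p=1/2$, at which $\partial\widetilde U/\partial b_{1}>0$ and $\partial\widetilde U/\partial b_{2}<0$. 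This is a routine calculation.

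The only delicate point is interpretive rather than technical: the phrase "at every differentiability point" together with the possibility of boundary SNE. I will handle it by noting that $U_i$ is a polynomial and hence smooth, so the identity holds everywhere in $[0,1]^N$, with one-sided derivatives at the endpoints. The first-order condition is invoked only at interior points.
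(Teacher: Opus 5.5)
Your proposal is correct and takes the same approach as the paper, which writes $U_i(p)=\widetilde U_i(b(p))$ with $b_{ih}=p_i$ on $\I_i$ and applies the chain rule. Your explicit example for the ``hence'' clause goes beyond the paper's one-line proof, and the paper's own interior centipede SNE $(1/3,3/7)$ illustrates it as well: there player 1's marginal incentive to pass is $+1/7$ at her first node and $-1/7$ at her last node, and these sum to zero.
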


\begin{proof}
Since $U_i(p)=\widetilde U_i(b(p))$ and $b_{ih}=p_i$, the result follows immediately from the chain rule.
\end{proof}

Equation \eqref{eq:aggregation} is an important mechanism. A prescribed action may be strictly inferior at one reached node, but changing the common parameter also changes behavior at other nodes. SNE therefore requires rule-level optimality.

Existence requires distinguishing a behavioral SNE from an ex ante mixture over common lotteries. A \emph{mixed SNE} is a Nash equilibrium in probability measures over the compact sets $X_i$.

\begin{theorem}[Existence]
Every finite coded extensive-form game has a mixed SNE. A behavioral SNE exists under any of the following sufficient conditions:
\begin{enumerate}[label=(\roman*)]
\item $U_i(\cdot,x_{-i})$ is quasiconcave on $X_i$ for every $i$ and $x_{-i}$;
\item every player moves at most once along every terminal history;
\item in the binary case $X_i=[0,1]$, the induced simple game has increasing differences.
\end{enumerate}
Under (iii), the equilibrium set has least and greatest elements.
\end{theorem}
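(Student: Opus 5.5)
The plan is to treat the simple game $(N,(X_i)_i,(U_i)_i)$ as an ordinary strategic-form game on compact convex strategy sets, and then to invoke a standard fixed-point or lattice theorem for each part. The first step records the regularity facts. Each $X_i=\Delta(M_i)$ is a nonempty compact convex subset of a Euclidean space. By the polynomial proposition above, each $U_j$ is a polynomial in $x$, so it is jointly continuous on $X$.

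\textbf{Mixed existence.} Continuity and compactness are exactly the hypotheses of Glicksberg's theorem, as already used for Theorem~\ref{he:theorem:auto-1}. That theorem gives a Nash equilibrium in Borel probability measures over the $X_i$, which is a mixed SNE by definition.

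\textbf{Behavioral existence under (i).} I will define the best-response correspondence $BR_i(x_{-i})=\argmax_{x_i\in X_i}U_i(x_i,x_{-i})$. It is nonempty because $U_i$ is continuous and $X_i$ is compact. It is convex-valued because $U_i(\cdot,x_{-i})$ is quasiconcave: the argmax is an upper contour set. It has closed graph by Berge's maximum theorem. Kakutani's theorem, or Debreu--Glicksberg--Fan, then gives a fixed point, and that fixed point is a behavioral SNE.

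\textbf{Behavioral existence under (ii).} I will reduce this case to (i). If player $i$ moves at most once along every terminal history, then $d_i\le 1$, so by the polynomial proposition $U_i$ is affine in $x_i$. To double-check this directly: each $\Pr(z\mid\sigma^x)$ contains at most one factor $\sigma_i^{x_i}(a\mid h)$, and that factor is linear in $x_i$ by \eqref{eq:simplebehavior}. An affine function is quasiconcave, so (i) applies.

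\textbf{Behavioral existence under (iii).} Here $X_i=[0,1]$ is a complete lattice and $U_i$ is continuous. One-dimensional strategy sets make $U_i$ trivially supermodular in $p_i$, and increasing differences in $(p_i,p_{-i})$ is assumed, so the game is supermodular. Topkis's theorem says the best-response correspondence has nonempty compact-lattice values, with greatest and least selections that are increasing in $p_{-i}$. Tarski's fixed-point theorem applied to the greatest selection gives a largest fixed point, and applied to the least selection gives a smallest one. The Milgrom--Roberts/Vives argument then shows that these two points are the greatest and least elements of the equilibrium set.

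The main obstacle is (iii), specifically checking that the order hypotheses of the lattice theorems hold. The sets are intervals, continuity of the polynomial payoffs gives the needed upper semicontinuity, and increasing differences must be read in the product order on $\prod_i[0,1]$. I expect this to be a direct citation to Topkis and Milgrom--Roberts once continuity is noted. If a player's mode set is not literally binary, the same argument would require $X_i$ to be a lattice, so the statement is restricted to the binary case for that reason.
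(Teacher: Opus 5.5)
Your proposal is correct and follows the paper's proof step for step. It gets continuity from the polynomial proposition, uses Glicksberg for the mixed SNE and a Kakutani/Debreu--Glicksberg--Fan fixed point under (i), reduces (ii) to (i) because each $U_i$ is affine in $x_i$, and applies the Topkis--Tarski supermodular-game argument under (iii); you simply fill in the standard details (Berge for closed graph, the monotone greatest and least selections, and why Tarski's extreme fixed points bound every equilibrium) that the paper compresses into one line each.
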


\begin{proof}
Continuity follows from the polynomial-payoff proposition. Compactness and continuity give mixed-strategy existence in the induced continuous game. Under (i), standard fixed-point existence applies because $X_i$ is compact and convex. Under (ii), each own payoff is affine in $x_i$, hence quasiconcave. Under (iii), monotone best responses and the lattice fixed-point argument for supermodular games apply.
\end{proof}

Behavioral existence can fail because the common parameter may enter the same player's payoff nonlinearly. The common-rule restriction can create genuinely nonconvex best-response problems. 

\subsection{Two applications}

\subsection{Centipede}

Consider a three-decision centipede in Figure \ref{he:eq:centipede-tree-SNE}. 

\begin{figure}[ht]
\centering
\begin{tikzpicture}[
  node distance=18mm and 25mm,
  decision/.style={circle,draw,inner sep=1.7pt},
  terminal/.style={rectangle,draw=none,inner sep=1pt}
]
\node[decision] (p1a) {$P_1$};
\node[terminal,below left=of p1a] (z1) {$(2,0)$};
\node[decision,below right=of p1a] (p2) {$P_2$};
\node[terminal,below left=of p2] (z2) {$(1,3)$};
\node[decision,below right=of p2] (p1b) {$P_1$};
\node[terminal,below left=of p1b] (z3) {$(4,2)$};
\node[terminal,below right=of p1b] (z4) {$(3,5)$};
\draw[-{Latex[length=2mm]}] (p1a) -- node[left] {$S_1$} (z1);
\draw[-{Latex[length=2mm]}] (p1a) -- node[right] {$C_1$} (p2);
\draw[-{Latex[length=2mm]}] (p2) -- node[left] {$S_2$} (z2);
\draw[-{Latex[length=2mm]}] (p2) -- node[right] {$C_2$} (p1b);
\draw[-{Latex[length=2mm]}] (p1b) -- node[left] {$S_3$} (z3);
\draw[-{Latex[length=2mm]}] (p1b) -- node[right] {$C_3$} (z4);
\end{tikzpicture}
\caption{The three-decision centipede.}
\label{he:eq:centipede-tree-SNE}
\end{figure}

Player 1 uses the same pass probability $p$ at both of her nodes; Player 2 uses pass probability $q$. Expected payoffs are
\begin{align}
U_1(p,q)&=2(1-p)+p(1-q)+4pq(1-p)+3p^2q
=2-p+3pq-p^2q,\label{eq:cent1}\\
U_2(p,q)&=3p(1-q)+2pq(1-p)+5p^2q
=3p-pq+3p^2q.\label{eq:cent2}
\end{align}

\begin{proposition}[Centipede SNE]
The set of SNE is
\[
\left\{(0,q):0\le q\le\frac13\right\}
\cup
\left\{\left(\frac13,\frac37\right),(1,1)\right\}.
\]
\end{proposition}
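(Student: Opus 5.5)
The plan is a direct computation of both players' best-response correspondences in the induced simple game, using the closed forms \eqref{eq:cent1}--\eqref{eq:cent2}, followed by intersecting their graphs. Both strategy sets are $[0,1]$ and the payoffs are low-degree polynomials. So I expect a complete case analysis to suffice, and I will not need any of the general existence results.

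\emph{Step 1: Player 2's best response.} Since Player 2 moves only once on every path, $U_2(p,q)=3p+q\,p(3p-1)$ is affine in $q$, with slope $p(3p-1)$. Hence:
\begin{itemize}
\item $BR_2(p)=\{0\}$ for $0<p<1/3$;
\item $BR_2(p)=[0,1]$ for $p\in\{0,1/3\}$;
\item $BR_2(p)=\{1\}$ for $p>1/3$.
\end{itemize}

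\emph{Step 2: Player 1's best response.} We have $\partial U_1/\partial p=-1+3q-2pq$ and $\partial^2U_1/\partial p^2=-2q\le0$, so $U_1(\cdot,q)$ is concave and the first-order condition with boundary checks characterizes the maximizers.
\begin{itemize}
\item For $q\le 1/3$, the derivative at $p=0$ is $3q-1\le0$, and by concavity it stays nonpositive on $[0,1]$. Thus $BR_1(q)=\{0\}$; the case $q=0$ is covered, since then $U_1=2-p$.
\item For $1/3<q<1$, the unique interior critical point $p=(3q-1)/(2q)$ lies in $(0,1)$, and it is the unique best response.
\item For $q=1$, the derivative is $2-2p\ge0$, vanishing only at $p=1$, so $BR_1(1)=\{1\}$.
\end{itemize}

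\emph{Step 3: Intersect the correspondences.} I split on $p$.
\begin{itemize}
\item If $p=0$: Player 2 is indifferent, and Player 1's condition requires $q\le1/3$. This gives the continuum $\{(0,q):0\le q\le 1/3\}$.
\item If $0<p<1/3$: Player 2 must play $q=0$, but then $BR_1(0)=\{0\}$, a contradiction.
\item If $p=1/3$: Player 2 is indifferent. Player 1 needs $(3q-1)/(2q)=1/3$, i.e.\ $9q-3=2q$, so $q=3/7$. This lies in $(1/3,1)$, so it is consistent.
\item If $p>1/3$: Player 2 must play $q=1$, and then $BR_1(1)=\{1\}$, giving $(1,1)$.
\end{itemize}
Collecting the cases yields exactly the stated set.

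The only delicate point is the boundary behaviour in Step 2, namely the regimes $q\le1/3$ and $q=1$. Concavity in $p$ is what makes the first-order analysis sufficient there. It also rules out the nonconvexity that the existence theorem warns about, since $-2q\le0$ everywhere in this game.
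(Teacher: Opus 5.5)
Your proposal is correct and follows the paper's own argument. The paper likewise reads $BR_1$ off $\partial U_1/\partial p=-1+3q-2pq$, reads $BR_2$ off the sign of $p(3p-1)$ in $U_2=3p+pq(3p-1)$, and intersects the two correspondences; your concavity check and explicit case split on $p$ just supply details the paper leaves implicit (for $0<q\le 1/3$, the uniqueness of $BR_1(q)=\{0\}$ comes from strict concavity, $-2q<0$, not from the derivative merely being nonpositive).
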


\begin{proof}
From \eqref{eq:cent1},
\[
\frac{\partial U_1}{\partial p}=-1+3q-2pq,
\]
so $BR_1(q)=\{0\}$ for $q\le1/3$ and $BR_1(q)=\{(3q-1)/(2q)\}$ for $q>1/3$. From \eqref{eq:cent2},
\[
U_2(p,q)=3p+pq(3p-1),
\]
so Player 2 chooses $q=0$ for $0<p<1/3$, is indifferent at $p\in\{0,1/3\}$, and chooses $q=1$ for $p>1/3$. Intersecting best responses gives the stated set.
\end{proof}

The full-pass equilibrium shows why SNE differs from nodewise reasoning. At Player 1's final node, Take yields $4$ and Pass yields $3$, so Pass is strictly locally inferior. But against $q=1$,
\[
U_1(p,1)=2+2p-p^2,
\]
which is maximized at $p=1$. Lowering $p$ improves Player 1's action at the final node but simultaneously raises the probability of taking $(2,0)$ at the first node. The whole rule is optimal even though one component is not.

\subsection{Finitely repeated prisoner's dilemma}

Consider an $H$-period repetition of
\[
\begin{array}{c|cc}
 & C & D\\ \hline
C&(R,R)&(S,T)\\
D&(T,S)&(P,P)
\end{array}
\qquad T>R>P>S.
\]
Take the reference mode to be Grim: cooperate while no defection has occurred and defect forever after the first defection. The opposite mode defects in the good state and cooperates in the punishment state. Player $i$ follows Grim at every history with the same probability $p_i$.

Fix the opponent at full Grim. At a good-state node, departing from Grim raises current payoff by $T-R$. Once punishment begins, a player who follows Grim with probability $x$ obtains per-period payoff
\[
V(x)=xP+(1-x)S\le P.
\]
Thus the loss from entering punishment, evaluated at full Grim, is $R-P$.

\begin{theorem}[Finite-horizon cooperation]
Full Grim, $p_1=p_2=1$, is an SNE if and only if
\begin{equation}
(H-1)(R-P)\ge 2(T-R).
\label{eq:pd}
\end{equation}
\end{theorem}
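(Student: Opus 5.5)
The plan is to check optimality directly in the simple game. Fix player $j$ at full Grim, $p_j=1$. By symmetry it is enough to show that $p_i=1$ maximizes $U_i(\cdot,1)$ on $[0,1]$ if and only if \eqref{eq:pd} holds.

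\textbf{Closed form for $U_i(p,1)$.} Against Grim, the path stays in the good state until player $i$'s first departure. Suppose that departure occurs in period $j+1$, for $j=0,\ldots,H-1$. This happens with probability $p^j(1-p)$, and on that path player $i$ receives:
\begin{itemize}
\item $R$ in each of the first $j$ periods;
\item $T$ in period $j+1$;
\item $V(p)=pP+(1-p)S$ per period in each of the remaining $H-j-1$ periods.
\end{itemize}
Player $j$ defects in every remaining period, and player $i$'s opposite mode in the punishment state cooperates. With probability $p^H$ the path is full cooperation, worth $HR$. Subtracting from $U_i(1,1)=HR$ gives
\[
U_i(1,1)-U_i(p,1)=(1-p)\sum_{j=0}^{H-1}p^j\Bigl[(H-j-1)\bigl(R-V(p)\bigr)-(T-R)\Bigr].
\]

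\textbf{Necessity.} Differentiate at $p=1$. Punishment nodes are unreached there, so $V$ enters evaluated at $V(1)=P$; this is the aggregation identity \eqref{eq:aggregation}, with only good-state nodes contributing. Dividing the difference above by $1-p$ and letting $p\to1$ gives
\[
\left.\frac{\partial U_i}{\partial p_i}\right|_{p=1}=\sum_{k=1}^{H}\bigl[(k-1)(R-P)-(T-R)\bigr]=\frac{H(H-1)}2(R-P)-H(T-R).
\]
If \eqref{eq:pd} fails, this derivative is negative. Then slightly lowering $p_i$ is a profitable deviation, so full Grim is not an SNE.

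\textbf{Sufficiency.} This is the main obstacle. $U_i$ has degree up to $H$ in $p$ and need not be concave, so the first-order condition alone does not give a global maximum; I will bound the difference directly.
\begin{itemize}
\item Since $V(p)\le P$, we have $R-V(p)\ge R-P$. Hence the bracket is at least
\[
c_j:=(H-j-1)(R-P)-(T-R).
\]
\item The weights $p^j$ are nonincreasing in $j$, and the $c_j$ are nonincreasing in $j$. Chebyshev's sum inequality for similarly ordered sequences then gives
\[
\sum_{j}p^jc_j\ \ge\ \frac1H\Bigl(\sum_j p^j\Bigr)\Bigl(\sum_j c_j\Bigr).
\]
\item The right-hand side is nonnegative exactly when \eqref{eq:pd} holds.
\end{itemize}
So $U_i(1,1)\ge U_i(p,1)$ for all $p\in[0,1]$ under \eqref{eq:pd}, and applying this to each player shows that $p_1=p_2=1$ is an SNE.
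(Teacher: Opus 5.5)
Your proof is correct and follows essentially the same route as the paper. Both use the first-departure decomposition, the bound $V(p)\le P$, and a Chebyshev-type inequality, which is exactly the paper's claim that the early-weighted average $L_H(x)$ is at least $(H-1)/2$; necessity in both comes from letting $p\to1$. Your necessity step is in fact slightly tighter than the paper's: you use the exact expression with $V(p)\to P$, whereas the paper argues the converse from an upper bound.
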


\begin{proof}
Suppose one player uses Grim $x<1$ while the opponent uses one. The first departure from Grim occurs in period $t$ with probability $x^{t-1}(1-x)$. Relative to full Grim, that history gives current gain $T-R$ and then $H-t$ punishment periods. Let
\[
L_H(x)=\frac{\sum_{t=1}^H x^{t-1}(H-t)}{\sum_{t=1}^H x^{t-1}}.
\]
Because the weights $x^{t-1}$ place relatively more mass on early periods, $L_H(x)\ge(H-1)/2$, with equality as $x\uparrow1$. Since $V(x)\le P$, any deviation has payoff gain bounded above by a positive factor times
\[
(T-R)-(R-P)L_H(x).
\]
Hence no deviation is profitable if \eqref{eq:pd} holds. Conversely, if \eqref{eq:pd} fails, values of $x$ sufficiently close to one make this expression positive, so full Grim is not a best response.
\end{proof}

This result isolates the SNE mechanism. Standard backward induction permits a last-period defection without changing earlier behavior. Under SNE, that deviation is unavailable: lowering Grim in the last period necessarily lowers it at all earlier good-state nodes as well. Full cooperation is therefore sustained exactly when the horizon-weighted punishment loss offsets the one-period temptation gain.

For the payoffs $(T,R,P,S)=(5,3,1,0)$, condition \eqref{eq:pd} becomes $2(H-1)\ge4$, so full Grim is an SNE for every $H\ge3$. The same trigger-rule argument extends directly to finite public-goods games. With $n$ players and payoff $w-c_i+\alpha\sum_j c_j$, full conditional contribution is an SNE exactly when
\[
(H-1)(n\alpha-1)\ge2(1-\alpha).
\]
Thus the mechanism is not specific to two-player dilemmas.

As discussed in subsection \ref{subsec:experiments}, SNE is consistent with continuation in centipede games and cooperation in repeated PDs.

\clearpage
\section{The Maastricht Paradox: Equilibrium Unraveling and Pure Optimin in the Maas Game}
\label{sec:maastricht}

\noindent
In this section, we introduce the \emph{Maas game}, named after Maastricht, where it was conceived, is a dynamic game of common-resource exploitation with features of the Prisoner's Dilemma and the tragedy of the commons. Its central contrast is that, under simple sufficient conditions, every Nash equilibrium of every current subgame recommends immediate liquidation, and the unique subgame-perfect equilibrium (SPNE) prescribes immediate liquidation at every active history. By contrast, pure optimin can prescribe preservation for a very long time, with liquidation delayed substantially as players become more patient.

We call this contrast the \emph{Maastricht paradox}. The equilibrium result is driven by unraveling: because liquidation ends the relationship and cannot be punished afterward, the eventual incentive to preempt propagates backward to the current round. Thus immediate equilibrium liquidation can arise even in an infinite-horizon game with a rapidly growing resource, whereas pure optimin can prescribe sustained preservation under the same underlying payoff structure.

\subsection{The Maas game}

The game is motivated by a tragedy-of-the-commons interpretation. Choosing $C$ means sustainable use: both players receive a current flow and, if both choose $C$, the common resource survives. Choosing $D$ means liquidation or over-exploitation: the resource is exhausted and the interaction ends. The key feature is that liquidation cannot be punished afterward.

There are two players, a discount factor $\delta\in(0,1)$, and stages $t=1,2,\ldots,N$, where $N\in\mathbb N\cup\{\infty\}$. Stage $t$ is reached only if both players chose $C$ at every earlier stage. At an active stage, the players simultaneously choose $D$ or $C$.

Let $g_t>0$ denote the scale of the resource's liquidation value at stage $t$, and let $r_t\ge0$ denote the sustainable-flow payoff obtained by each player when both choose $C$. Measured from stage $t$, the stage game is
\begin{equation}
G_t=
\begin{array}{c|cc}
 & D & C\\ \hline
D & (2g_t,2g_t) & (5g_t,g_t)\\[1mm]
C & (g_t,5g_t) & (r_t,r_t)+\delta G_{t+1}.
\end{array}
\label{eq:generalgame}
\end{equation}
Any occurrence of $D$ ends the game. If $N<\infty$, mutual cooperation at $N$ yields $(r_N,r_N)$ and the game ends. In the infinite game, the discounted flow sums used below are assumed finite.

A pure strategy is payoff-equivalent to a first stopping time: cooperate as long as the interaction remains active and choose $D$ at the first prescribed stopping stage. A complete extensive-form strategy still contains prescriptions at all active histories, including histories that are off the equilibrium path. For calculations from stage $1$, define
\begin{equation}
R_t:=\sum_{s=1}^{t-1}\delta^{s-1}r_s,
\qquad
X_t:=\delta^{t-1}g_t,
\label{eq:RXgeneral}
\end{equation}
and
\begin{equation}
\ell_t:=R_t+X_t,
\qquad
m_t:=R_t+2X_t,
\qquad
h_t:=R_t+5X_t.
\label{eq:lmhgeneral}
\end{equation}
Thus if player 1 stops first at $t$, the payoff pair is $(h_t,\ell_t)$; if both stop at $t$, it is $(m_t,m_t)$.

\subsection{General equilibrium conditions}

For a fixed $\delta$, consider the following conditions:
\begin{align}
\text{(P)}\qquad &5g_t>r_t+2\delta g_{t+1}
&&\text{for every relevant }t,
\label{eq:P}\\
\text{(T1)}\qquad &\exists q\in(0,1),\ \exists T_0<\infty:\quad
\delta g_{t+1}\le qg_t
&&\text{for all }t\ge T_0,
\label{eq:T1}\\
\text{(T2)}\qquad &
F_t(\delta):=\sum_{k=0}^{\infty}\delta^k r_{t+k}<\infty,
\qquad
\frac{F_t(\delta)}{g_t}\longrightarrow0.
\label{eq:T2}
\end{align}
Condition \eqref{eq:P} is the one-step preemption inequality. If both players are known to stop next period, then cooperating today gives $r_t+2\delta g_{t+1}$, whereas unilaterally liquidating today gives $5g_t$. Conditions \eqref{eq:T1}--\eqref{eq:T2} say that sufficiently far in the future, discounted growth of the liquidation prize is eventually contracting and the entire sustainable-flow tail is negligible relative to current liquidation value.

\begin{lemma}[Tail domination]
\label{lem:tail}
Suppose \eqref{eq:T1}--\eqref{eq:T2} hold. For every starting date $t$, there exists a finite $T\ge t$ such that, in the subgame beginning at $T$, stopping at $T$ weakly dominates every later stopping time, including never stopping, and the domination is strict whenever the opponent stops at or after $T$ or never stops.
\end{lemma}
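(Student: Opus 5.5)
The plan is to fix the starting date $t$ and choose $T\ge\max\{t,T_0\}$ large enough that
\[
F_T(\delta)<5(1-q)\,g_T .
\]
Such a $T$ exists by \eqref{eq:T2}, since $F_T(\delta)/g_T\to0$ while $5(1-q)>0$ is a fixed constant. We then compare, in the subgame beginning at $T$, the payoff to stopping at $T$ with the payoff to any later stopping time $s\in\{T+1,T+2,\ldots\}\cup\{\infty\}$. The comparison is made opponent-strategy by opponent-strategy. All payoffs are measured from stage $T$, which amounts to dividing the quantities in \eqref{eq:RXgeneral}--\eqref{eq:lmhgeneral} by $\delta^{T-1}$. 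Because pure strategies are payoff-equivalent to stopping times, it suffices to range over the opponent's stopping time $u\in\{T,T+1,\ldots\}\cup\{\infty\}$ in the subgame. Mixed opponents then follow by taking expectations.

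First I would bound uniformly the payoff of any later stopping time $s>T$. Before the game ends, the player collects discounted flows $\sum_{k=0}^{\min\{s,u\}-T-1}\delta^k r_{T+k}$. Since $r\ge0$, this sum is at most $F_T(\delta)$. At the ending stage $e=\min\{s,u\}>T$, if it is finite, she gets one of $5g_e$, $2g_e$ or $g_e$, discounted by $\delta^{e-T}$. Iterating \eqref{eq:T1}, which is valid because $T\ge T_0$, gives $\delta^{k}g_{T+k}\le q^k g_T\le q\,g_T$ for every $k\ge1$. So every later stopping time, including never stopping, yields at most
\[
F_T(\delta)+5q\,g_T .
\]

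Next I would split on the opponent's stopping time. If $u=T$, stopping at $T$ gives $2g_T$, whereas any later $s$ gives $g_T$; domination is strict. If $u>T$ or $u=\infty$, stopping at $T$ gives $5g_T$. Any later $s$ gives at most $F_T(\delta)+5q g_T$, which is strictly less than $5g_T$ by the choice of $T$. These two cases exhaust the opponent behaviour under which the subgame at $T$ is active, so they give both weak domination and the asserted strictness. If one also wants to cover opponent strategies that stop before $T$, those histories never reach $T$, so the comparison holds with equality.

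The main obstacle is the bookkeeping in the uniform bound. Three points need care: discounting must be correctly renormalized to stage $T$; the flow portion must be controlled by the whole tail $F_T$ rather than by a partial sum (here nonnegativity of $r_t$ is used); and \eqref{eq:T1} must be iterated from $T\ge T_0$ so that all later liquidation prizes are dominated by $q\,g_T$. Once that bound is in place, the case analysis is immediate.
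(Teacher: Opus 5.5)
Your proposal is correct and follows essentially the paper's own argument. It uses the same choice of $T\ge\max\{t,T_0\}$ with $F_T(\delta)<5(1-q)g_T$, and the same uniform bound $F_T(\delta)+5qg_T$ on every later stopping time, obtained by iterating \eqref{eq:T1}. It also uses the same case split on the opponent stopping at $T$ ($2g_T$ versus $g_T$) or later/never ($5g_T$ versus at most $F_T(\delta)+5qg_T$). Your use of the actual ending date $e=\min\{s,u\}$ is slightly cleaner than the paper's bound. One minor nit: measuring payoffs from stage $T$ means subtracting the sunk flows $R_T$ as well as dividing by $\delta^{T-1}$, but this does not affect any comparison.
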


\begin{proof}
Fix the starting date $t$. Choose $T\ge\max\{t,T_0\}$ so large that
\[
F_T(\delta)<5(1-q)g_T.
\]
By iterating \eqref{eq:T1}, for every $k\ge1$,
\[
\delta^k g_{T+k}\le q^k g_T\le qg_T.
\]
If the opponent stops at $T$, stopping at $T$ gives $2g_T$ rather than $g_T$, so stopping at $T$ is strictly better. If the opponent stops at $T+k$ for some $k\ge1$, any plan that waits beyond $T$ can obtain at most the entire cooperative-flow tail plus the most favorable terminal prize, and hence has payoff, measured from $T$, at most
\[
F_T(\delta)+5\delta^k g_{T+k}
\le F_T(\delta)+5qg_T
<5g_T.
\]
Stopping at $T$ against an opponent who continues gives exactly $5g_T$, so immediate stopping is strictly better.

If instead the opponent never stops, a player who also never stops receives exactly $F_T(\delta)<5g_T$, and any player who stops later at $T+k$ is covered by the preceding bound. Thus stopping at $T$ is also strictly better against never stopping. This proves the claim.
\end{proof}

\begin{theorem}[General equilibrium unraveling]
\label{thm:general}
Fix $\delta\in(0,1)$.
\begin{enumerate}[label=(\roman*)]
\item For a finite horizon $N$, suppose \eqref{eq:P} holds for $t<N$ and $5g_N>r_N$. Then, in every Nash equilibrium of every active subgame, both players choose $D$ with probability one in the current round. Consequently, the unique SPNE prescribes $D$ at every active history.
\item For the infinite horizon, suppose \eqref{eq:P}, \eqref{eq:T1}, and \eqref{eq:T2} hold. Then the same conclusion holds: every Nash equilibrium of every active subgame chooses $D$ with probability one in the current round, and the unique SPNE prescribes $D$ at every active history.
\end{enumerate}
Hence, whenever the hypotheses hold from stage $1$, equilibrium liquidation occurs immediately.
\end{theorem}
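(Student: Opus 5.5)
The plan is backward induction over active stages, using the one-step preemption inequality \eqref{eq:P} as the inductive engine. Part (i) comes from the last stage; part (ii) gets its anchor from Lemma~\ref{lem:tail}. Throughout I work with continuation payoffs measured from the current stage $t$, so the stage game is $G_t$ in \eqref{eq:generalgame} with the cooperative cell replaced by $r_t$ plus $\delta$ times the continuation value.

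\textbf{Finite horizon.} At stage $N$ the game is the one-shot matrix with cooperative cell $(r_N,r_N)$. Since $5g_N>r_N$ and $2g_N>g_N$, the action $D$ strictly dominates $C$. So every Nash equilibrium of that subgame plays $D$ with probability one, and its value is $(2g_N,2g_N)$.

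Now suppose inductively that every Nash equilibrium of every subgame starting at $t+1$ plays $D$ immediately with probability one, so its continuation value is $(2g_{t+1},2g_{t+1})$. Take any Nash equilibrium $\sigma$ of the subgame at $t$. I must handle the fact that a Nash equilibrium, unlike an SPNE, need not be equilibrium play in the continuation at $t+1$. If $(C,C)$ has positive probability at $t$, however, the continuation at $t+1$ is reached with positive probability. Then the restriction of $\sigma$ to it must be a Nash equilibrium of that subgame, since otherwise a player could profitably deviate there. By the induction hypothesis the continuation value is then $2g_{t+1}$ for each player. If $(C,C)$ has probability zero, I can still assign it the value $2g_{t+1}$ for the purpose of checking deviations; this is harmless because the comparison below only needs an upper bound on what $C$ yields.

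The reduced stage game then has cooperative cell $r_t+2\delta g_{t+1}<5g_t$ by \eqref{eq:P}, and $2g_t>g_t$. Hence $D$ strictly dominates $C$ against any opponent mixture, and any positive weight on $C$ is a profitable deviation. This yields the Nash claim at every active history. It also gives uniqueness of the SPNE, because an SPNE must induce a Nash equilibrium at every active history.

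The one delicate point is the case where $C$ is played with positive probability and continuation play is merely Nash rather than credible. That is exactly why I state the induction hypothesis for all Nash equilibria of every subgame, not only for SPNE.

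\textbf{Infinite horizon.} Here there is no last stage, so the main obstacle is producing an anchor for the induction. I would use Lemma~\ref{lem:tail} from each starting date $t$: it gives a finite $T\ge t$ at which stopping strictly dominates every later stopping time against any opponent behavior. Because pure strategies are payoff-equivalent to stopping times and mixtures are averages, in every Nash equilibrium of the subgame at $T$ each player stops at $T$ with probability one, with continuation value $2g_T$.

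That conclusion alone is not enough to induct on, because I need it at every $s\ge T$. Lemma~\ref{lem:tail} applied from date $s$ gives some $T(s)\ge s$, not necessarily $s$ itself. I would reread the proof of the lemma: the only requirement on $T$ is $T\ge T_0$ and $F_T(\delta)<5(1-q)g_T$. By \eqref{eq:T2}, the ratio $F_T/g_T$ tends to $0$, so this inequality holds for all sufficiently large $T$, say all $T\ge T^*$. Hence immediate stopping holds in every Nash equilibrium of every subgame beginning at any $s\ge T^*$.

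From there the induction argument of part (i) runs backward from $T^*$ to $1$, invoking \eqref{eq:P} at each step. This gives immediate $D$ in every Nash equilibrium of every active subgame, and hence the unique SPNE prescribing $D$ everywhere. The final sentence of the theorem is then the statement at $t=1$.
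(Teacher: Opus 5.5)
Your argument is correct, but it takes a different route from the paper's.

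\textbf{How the two proofs differ.} The paper does no stage-by-stage induction. It treats mixed strategies as distributions over first stopping times and runs a global support argument in two steps:
\begin{enumerate}
\item The largest support points of the two players must coincide. Otherwise the player whose support extends further loses $m_b-\ell_b$ on the event that the opponent stops at $b$.
\item That common maximal point must be the current date. Moving it back one period gains $m_{s-1}-\ell_{s-1}$ or $h_{s-1}-m_s>0$, and the latter inequality is exactly \eqref{eq:P}.
\end{enumerate}
In the infinite horizon the paper uses Lemma~\ref{lem:tail} only to show that equilibrium supports are finite, and then reuses the finite argument verbatim.

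Your route inducts on the stage instead. This requires the auxiliary fact that a Nash equilibrium induces a Nash equilibrium on every subgame reached with positive probability. In exchange, \eqref{eq:P} appears transparently as strict dominance in a reduced stage game, and the infinite horizon becomes a literal backward induction from $T^*$ rather than a support-finiteness argument. The paper's version never conditions on reaching stage $t+1$, so it never has to reason about off-path continuation play. Yours confronts the Nash-versus-SPNE issue head on, and you resolve it correctly.

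\textbf{Small repairs.}
\begin{itemize}
\item Your justification of the null case is misstated. When $(C,C)$ has probability zero, $2g_{t+1}$ is not an upper bound on what $C$ yields, because the opponent's continuation there is arbitrary. If the opponent keeps cooperating off path, $C$ today can be worth $r_t+5\delta g_{t+1}$, which may exceed $5g_t$. So ``$D$ strictly dominates $C$ against any opponent mixture'' is false in the true game. The correct reason is this: if $(C,C)$ is null, any player who puts positive weight on $C$ faces an opponent who plays $D$ for sure. Then $C$ yields $g_t<2g_t$, and the continuation cell carries zero weight.
\item You establish uniqueness but never existence. You should check, as the paper does, that the profile prescribing $D$ at every active history is an SPNE: against current $D$, $D$ yields $2g_t>g_t$. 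This is not automatic in the infinite horizon.
\item Your strengthening of Lemma~\ref{lem:tail} to all $T\ge T^*$ is valid but unnecessary. For each starting date $t$, you can simply induct down from the $T(t)$ that the lemma supplies.
\end{itemize}
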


\begin{proof}
Consider first a finite active subgame beginning at stage $t$. A mixed strategy is payoff-equivalent within this subgame to a distribution over first stopping times, with $\infty$ denoting no liquidation before the horizon. Let $a$ and $b$ be the largest stopping times in the two players' supports, ordering $\infty$ after $N$. If $a>b$, then stopping at $b$ gives player 1 the same payoff as stopping at $a$ against every opponent stop before $b$, and a strictly larger payoff when the opponent stops at $b$, because $m_b>\ell_b$. Hence $a$ cannot be a best response. Symmetrically, $b>a$ is impossible. Thus the two maximal support points coincide; call the common point $s$.

If $t<s\le N$, compare stopping at $s$ with stopping at $s-1$. Against an opponent stop before $s-1$, the two plans give the same payoff. Against an opponent stop at $s-1$, stopping at $s-1$ gives $m_{s-1}>\ell_{s-1}$. Against an opponent stop at $s$,
\[
h_{s-1}-m_s
=\delta^{s-2}\bigl(5g_{s-1}-r_{s-1}-2\delta g_s\bigr)>0
\]
by \eqref{eq:P}. Since the opponent assigns positive probability to $s$, stopping at $s$ cannot be a best response. If $s=\infty$, stopping at $N$ is weakly better against every finite opponent stop and strictly better against $\infty$, because
\[
h_N-W_N=\delta^{N-1}(5g_N-r_N)>0.
\]
Therefore $s=t$, so both players stop in the current round with probability one. This proves part (i).

For the infinite game, fix an active subgame beginning at $t$ and choose $T\ge t$ as in Lemma~\ref{lem:tail}. Let $(\mu_1,\mu_2)$ be a Nash equilibrium in stopping-time distributions. If, say, $\mu_2$ assigns positive probability to stopping at or after $T$, including never stopping, then every stopping time of player 1 later than $T$, as well as never stopping, is strictly worse than stopping at $T$ on an event of positive probability and no better otherwise. Hence player 1's support is contained in the finite set $\{t,\ldots,T\}$. Let $a$ be its largest support point. Any stopping time of player 2 later than $a$ is then strictly worse than stopping at $a$: the two plans coincide against every opponent stop before $a$, while stopping at $a$ gives $m_a>\ell_a$ when player 1 stops at $a$, an event with positive probability. Hence player 2's support is finite as well.

If instead $\mu_2$ assigns no probability to stopping at or after $T$ and no probability to never stopping, then its support is already contained in $\{t,\ldots,T-1\}$, and the same largest-support argument makes player 1's support finite. Thus every infinite-horizon equilibrium has finite stopping-time supports. The finite maximal-support argument above then applies and implies stopping at the current date $t$ with probability one.

The strategy profile prescribing $D$ at every active history is indeed a SPNE, since against current $D$, choosing $D$ yields $2g_t>g_t$. Since the current action is uniquely pinned down in every active subgame, this SPNE is unique.
\end{proof}

The infinite-horizon argument is the counterpart of ``backward induction from infinity.'' There is no literal last stage but Lemma~\ref{lem:tail} solves an entire sufficiently distant infinite tail at once, after which condition \eqref{eq:P} propagates liquidation backward through the finitely many preceding stages.

\begin{corollary}[A discount-factor-independent sufficient condition]
\label{cor:uniform}
Suppose the sequences $(g_t,r_t)$ satisfy
\begin{align}
&r_t+2g_{t+1}<5g_t &&\text{for all }t,\label{eq:U1}\\
&\frac{g_{t+1}}{g_t}\longrightarrow1,\label{eq:U2}\\
&\frac{\sum_{k=0}^{\infty}\delta^k r_{t+k}}{g_t}\longrightarrow0
&&\text{for every }\delta\in(0,1).\label{eq:U3}
\end{align}
Then the conclusion of Theorem~\ref{thm:general} holds for every $\delta\in(0,1)$, for every finite horizon and for the infinite horizon.
\end{corollary}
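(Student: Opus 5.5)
The plan is to reduce the corollary to Theorem~\ref{thm:general} by checking its hypotheses for an arbitrary fixed $\delta\in(0,1)$. Fix such a $\delta$. For condition \eqref{eq:P} I use $\delta<1$ and $g_{t+1}>0$, which give
\[
r_t+2\delta g_{t+1}<r_t+2g_{t+1}<5g_t
\]
by \eqref{eq:U1}. So the one-step preemption inequality holds at every $t$, whatever the horizon.

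For the finite-horizon case, part (i) of Theorem~\ref{thm:general} also needs the terminal condition $5g_N>r_N$. I will derive it from \eqref{eq:U1} at $t=N$, since that inequality is assumed for all $t$ and hence for the sequences beyond $N$. This gives $r_N<5g_N-2g_{N+1}<5g_N$, using only positivity of $g_{N+1}$. Part (i) then applies directly.

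For the infinite horizon I need \eqref{eq:T1} and \eqref{eq:T2}. Condition \eqref{eq:T2} is exactly \eqref{eq:U3} at the fixed $\delta$; finiteness of $F_t(\delta)$ is implicit, since the ratio converges to zero and so must be eventually finite. I expect some care here: if finiteness is not automatic at small $t$, I would note that the sum from $t$ differs from the sum from a later $t'$ only by finitely many finite terms, so it is finite for all $t$. For \eqref{eq:T1} the key step is the ratio condition \eqref{eq:U2}. Choose any $q\in(\delta,1)$, for instance $q=(1+\delta)/2$. Because $g_{t+1}/g_t\to1$, there is $T_0$ with $g_{t+1}/g_t\le q/\delta$ for all $t\ge T_0$, since $q/\delta>1$. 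Equivalently $\delta g_{t+1}\le qg_t$ for $t\ge T_0$, which is \eqref{eq:T1}. Part (ii) of Theorem~\ref{thm:general} then gives the conclusion.

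The main point to handle carefully is this choice of $q$ depending on $\delta$. The limit in \eqref{eq:U2} being exactly $1$ is what lets any $\delta<1$ produce strict eventual contraction. This is also where the independence from $\delta$ comes from, so it should be stated explicitly. Since $\delta$ and the horizon were arbitrary, the corollary follows.
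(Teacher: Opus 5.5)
Your proposal is correct and follows the paper's proof: \eqref{eq:U1} gives \eqref{eq:P} and the terminal inequality $5g_N>r_N$ via $\delta<1$ and $g_{t+1}>0$. Condition \eqref{eq:U2} gives \eqref{eq:T1} by choosing $q\in(\delta,1)$ and using $q/\delta>1$, and \eqref{eq:U3} at the fixed $\delta$ is \eqref{eq:T2}. Your added remark, that $F_t(\delta)$ is finite at early $t$ because it differs from a later finite tail by finitely many finite terms, is a detail the paper leaves implicit.
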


\begin{proof}
Condition \eqref{eq:U1} implies \eqref{eq:P} for every $\delta<1$ and also implies $5g_t>r_t$. From \eqref{eq:U2}, for any fixed $\delta<1$, choose $q\in(\delta,1)$. Then, for all sufficiently large $t$,
\[
\delta\frac{g_{t+1}}{g_t}\le q,
\]
which is \eqref{eq:T1}. Condition \eqref{eq:U3} is exactly \eqref{eq:T2}.
\end{proof}

\subsection{A specification}
\label{sec:special}

The general result does not depend on any particular functional form. To make the Maastricht paradox clear while retaining the theorem for every $\delta<1$, consider
\begin{equation}
g_t=e^{\sqrt t},
\qquad
r_t=t.
\label{eq:special}
\end{equation}
The stock value grows faster than every fixed polynomial, while its proportional growth factor tends to one:
\[
\frac{g_{t+1}}{g_t}=e^{\sqrt{t+1}-\sqrt t}\longrightarrow1.
\]
Moreover $t<e^{\sqrt t}$ for every $t\ge1$. One way to see this is to write $x=\sqrt t$ and note that $x^2e^{-x}\le4e^{-2}<1$. Also,
\[
\sqrt{t+1}-\sqrt t
=\frac{1}{\sqrt{t+1}+\sqrt t}
\le\sqrt2-1,
\]
so
\[
\frac{g_{t+1}}{g_t}
\le e^{\sqrt2-1}<2.
\]
Hence
\[
r_t+2g_{t+1}
=t+2g_{t+1}
<g_t+4g_t
=5g_t,
\]
so \eqref{eq:U1} holds. Finally,
\[
F_t(\delta)
=\sum_{k=0}^{\infty}\delta^k(t+k)
=\frac{t}{1-\delta}+\frac{\delta}{(1-\delta)^2},
\]
and therefore $F_t(\delta)/e^{\sqrt t}\to0$. Corollary~\ref{cor:uniform} applies.

\begin{corollary}[Maas-game SPNE for the specification]
For \eqref{eq:special}, for every horizon $N\in\mathbb N\cup\{\infty\}$ and every $\delta\in(0,1)$, every Nash equilibrium of every active subgame chooses $D$ in its current round with probability one, and the unique SPNE prescribes $D$ at every active history. From stage $1$, the realized payoff is $(2e,2e)$.
\end{corollary}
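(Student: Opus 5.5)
The corollary is an application of Corollary~\ref{cor:uniform}, and the text immediately preceding it has already checked the hypotheses. My plan is to collect those verifications and invoke the corollary. I will then compute the stage-$1$ payoff directly.

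First, I confirm \eqref{eq:U1}. The bound $t<e^{\sqrt t}$ follows from $x^2e^{-x}\le 4e^{-2}<1$ with $x=\sqrt t$. The ratio bound $g_{t+1}/g_t\le e^{\sqrt2-1}<2$ holds because $\sqrt{t+1}-\sqrt t=1/(\sqrt{t+1}+\sqrt t)$ is largest at $t=1$. Together these give $t+2g_{t+1}<g_t+4g_t=5g_t$. Second, \eqref{eq:U2} holds since $\sqrt{t+1}-\sqrt t\to0$.

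Third, for \eqref{eq:U3}, the closed form $F_t(\delta)=t/(1-\delta)+\delta/(1-\delta)^2$ is finite and grows linearly in $t$. Because $e^{\sqrt t}$ eventually dominates any polynomial, $F_t(\delta)/e^{\sqrt t}\to0$ for every fixed $\delta\in(0,1)$. With these three conditions in hand, Corollary~\ref{cor:uniform} gives the conclusion of Theorem~\ref{thm:general} for every $\delta$ and every horizon. That conclusion is that every Nash equilibrium of every active subgame plays $D$ with probability one in the current round, and that the SPNE is unique with $D$ at every active history.

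One point needs care in the finite-horizon case $N<\infty$. Part (i) of Theorem~\ref{thm:general} also requires $5g_N>r_N$, and Corollary~\ref{cor:uniform} asserts that \eqref{eq:U1} implies this. Here it also follows directly from $N<e^{\sqrt N}<5e^{\sqrt N}$.

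Finally, under the unique SPNE both players choose $D$ at stage $1$, so the payoff is $m_1=R_1+2X_1$. Since $R_1=0$ (an empty sum) and $X_1=g_1=e$, each player receives $2e$, giving the payoff $(2e,2e)$. I expect no real obstacle. The only delicate point is ensuring the inequality $t<e^{\sqrt t}$ holds for all $t\ge1$, not just asymptotically, which the uniform bound $x^2e^{-x}\le4e^{-2}$ handles.
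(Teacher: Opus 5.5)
Your proposal is correct and follows the paper's own route. The paragraph preceding the corollary verifies \eqref{eq:U1}--\eqref{eq:U3} for $g_t=e^{\sqrt t}$, $r_t=t$ with the same bounds you use ($x^2e^{-x}\le4e^{-2}<1$, $g_{t+1}/g_t\le e^{\sqrt2-1}<2$, and the closed form for $F_t(\delta)$), then invokes Corollary~\ref{cor:uniform}; your remarks on the terminal condition $5g_N>r_N$ and the stage-$1$ payoff $(2g_1,2g_1)=(2e,2e)$ just make explicit what the paper leaves implicit.
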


A normal-form Nash equilibrium of the full dynamic game need not prescribe equilibrium play in active subgames that are off its equilibrium path. Thus different Nash equilibrium strategy profiles may differ at such unreached active histories. The stronger statement established above is that, when any active subgame is considered on its own, every Nash equilibrium of that subgame chooses $D$ immediately; subgame perfection therefore selects the unique complete strategy profile prescribing $D$ at every active history.

\subsection{Pure-strategy optimin for the specification}

We now retain \eqref{eq:special} and restrict the game to pure strategies. Let $S_i^P$ denote player $i$'s pure-strategy space, represented payoff-equivalently by stopping dates. For finite $N$,
\[
S_i^P=\{1,\ldots,N,\infty\},
\]
where $\infty$ means no liquidation before the horizon; for $N=\infty$,
\[
S_i^P=\mathbb N\cup\{\infty\}.
\]
The term \emph{pure optimin} refers to the optimin on the restricted game $S_1^P\times S_2^P$.

Following the pure-strategy restriction of the optimin criterion in \citet{ismail2025}, for a pure profile $s=(s_i,s_j)\in S_i^P\times S_j^P$, define player $j$'s admissible pure responses by
\begin{equation}
B_j^P(s)
:=
\{s_j\}
\cup
\left\{
 s'_j\in S_j^P:
 u_j(s_i,s'_j)>u_j(s)
\right\},
\label{eq:pure-response-set}
\end{equation}
and define player $i$'s pure-strategy performance by
\begin{equation}
\pi_i^P(s)
:=
\inf_{s'_j\in B_j^P(s)}u_i(s_i,s'_j).
\label{eq:performance}
\end{equation}
A profile $s\in S_1^P\times S_2^P$ is a \emph{pure optimin} if its performance vector
\[
\pi^P(s):=(\pi_1^P(s),\pi_2^P(s))
\]
is Pareto-undominated among the performance vectors generated by pure profiles. There may be several pure optimins, just as there may be several Nash equilibria.

For \eqref{eq:special}, define
\begin{equation}
A_t:=\sum_{s=1}^{t-1}\delta^{s-1}s,
\qquad
X_t:=\delta^{t-1}e^{\sqrt t},
\label{eq:AXspecial}
\end{equation}
and
\begin{equation}
\ell_t:=A_t+X_t,
\qquad
m_t:=A_t+2X_t,
\qquad
h_t:=A_t+5X_t.
\label{eq:lmhspecial}
\end{equation}
If $m<n$, the stopping profile $(m,n)$ yields $(h_m,\ell_m)$, while $(t,t)$ yields $(m_t,m_t)$. If neither player liquidates, define
\[
W_N:=\sum_{t=1}^{N}\delta^{t-1}t
\quad(N<\infty),
\qquad
W_\infty:=\frac{1}{(1-\delta)^2}.
\]

For a finite stopping time $t$, let
\begin{align}
a_t&:=\min\left\{m_t,
\inf_{\substack{k<t\\h_k>m_t}}\ell_k\right\},\label{eq:at}\\
b_t&:=\min\left\{m_t,
\inf_{\substack{k<t\\h_k>\ell_t}}\ell_k\right\},\label{eq:bt}\\
c_t&:=\min\left\{\ell_t,
\inf_{\substack{k<t\\h_k>h_t}}\ell_k\right\},\label{eq:ct}
\end{align}
where the infimum of the empty set is $+\infty$. Also define
\begin{equation}
a_\infty:=\min\left\{W_N,
\inf_{\substack{k\le N\\h_k>W_N}}\ell_k\right\}
\label{eq:ainf}
\end{equation}
for finite $N$, and use $k\in\mathbb N$ and $W_\infty$ when $N=\infty$.

\begin{proposition}[Exact pure-optimin reduction]
\label{prop:optimin}
For the game under \eqref{eq:special},
\begin{equation}
\pi^P(t,t)=(a_t,a_t),
\qquad
\pi^P(m,n)=(b_m,c_m)\quad\text{for every }m<n,
\label{eq:allperf}
\end{equation}
and symmetrically
\[
\pi^P(n,m)=(c_m,b_m)\quad\text{for every }m<n.
\]
Consequently, the set of pure-profile performance vectors is exactly
\begin{equation}
\mathcal P_N
:=
\{(a_t,a_t),(b_t,c_t),(c_t,b_t):1\le t\le N\}
\cup\{(a_\infty,a_\infty)\}
\label{eq:PN}
\end{equation}
for finite $N$, with $t+1=\infty$ understood at the terminal date when an adjacent representative is desired. For $N=\infty$, replace $1\le t\le N$ by $t\in\mathbb N$. The pure optimins are exactly the pure profiles whose performance vectors lie on the Pareto frontier $\operatorname{PF}(\mathcal P_N)$. In particular, all asymmetric profiles with the same earlier stopping date have the same pure-strategy performance vector.
\end{proposition}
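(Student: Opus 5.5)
The plan is a direct computation from definitions \eqref{eq:pure-response-set}--\eqref{eq:performance}. For each of the three kinds of pure profile I will list which unilateral pure deviations of the opponent are strictly profitable for her. The performance is then the minimum of the original payoff and the payoffs those deviations leave to the evaluated player. I use the payoff bookkeeping already set out: if $m<n$, then $(m,n)$ yields $(h_m,\ell_m)$; $(t,t)$ yields $(m_t,m_t)$; and no liquidation yields $W_N$ to each player. Since $X_t>0$, we have $\ell_t<m_t<h_t$.

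\emph{Symmetric profiles $(t,t)$.} Player $2$'s deviations fall into three groups.
\begin{itemize}
\item A deviation to $k<t$ gives her $h_k$ and gives player $1$ the payoff $\ell_k$. It is admissible iff $h_k>m_t$.
\item A deviation to any $s'>t$, including $\infty$, gives her $\ell_t<m_t$, so it is never admissible.
\item Keeping $t$ gives player $1$ the payoff $m_t$.
\end{itemize}
Hence $\pi_1^P(t,t)=a_t$, and the same holds for player $2$ by symmetry. The profile $(\infty,\infty)$ is handled the same way: the only profitable deviations are stops at $k$ with $h_k>W_N$, which leave the evaluated player $\ell_k$. This gives $a_\infty$.

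\emph{Asymmetric profiles $(m,n)$ with $m<n$, player $1$'s performance.} Player $2$'s deviations are as follows.
\begin{itemize}
\item Deviating to $m$ gives her $m_m>\ell_m$, so it is admissible and leaves player $1$ the payoff $m_m$.
\item Deviating to $k<m$ is admissible iff $h_k>\ell_m$, and leaves player $1$ the payoff $\ell_k$.
\item Deviating to any date after $m$ leaves her at $\ell_m$, so it is not admissible.
\end{itemize}
Since $m_m<h_m$, the original payoff $h_m$ never binds, and the minimum is exactly $b_m$.

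\emph{Asymmetric profiles $(m,n)$, player $2$'s performance.} This is the main obstacle. Player $1$ may profitably deviate to dates \emph{later} than $m$, because $h_t$ need not be monotone under \eqref{eq:special}. I must show that such deviations never push player $2$ below $\ell_m$, and hence never below $c_m$. Deviations to $k<m$ with $h_k>h_m$ leave player $2$ the payoff $\ell_k$; together with $\ell_m$ itself, these produce $c_m$. For the remaining deviations I will use the monotonicity of $A_t$ in $t$ and compare $X$-terms.
\begin{itemize}
\item A deviation to $s$ with $m<s<n$ is admissible iff $h_s>h_m$, and it leaves player $2$ the payoff $\ell_s$. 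If $X_s\ge X_m$, then $\ell_s>\ell_m$ immediately. Otherwise
\[
\ell_s-\ell_m=(h_s-h_m)+4(X_m-X_s)>0.
\]
\item A deviation to $n$ is admissible iff $m_n>h_m$, and it leaves player $2$ the payoff $m_n$. If $X_n\ge X_m$, then $m_n\ge\ell_m$ directly. Otherwise $A_n-A_m>3(X_m-X_n)$, so $m_n-\ell_m>2(X_m-X_n)>0$.
\item A deviation past $n$, including $\infty$ when $n<\infty$, is admissible only if $\ell_n>h_m$. It then leaves player $2$ the payoff $h_n>\ell_n>h_m>\ell_m$.
\end{itemize}
So none of these deviations affects the infimum, and $\pi_2^P(m,n)=c_m$. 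The formula does not depend on $n$, and the case $(n,m)$ follows by relabeling the players.

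Collecting the three cases shows that the set of performance vectors is exactly $\mathcal P_N$. For each $t$, the asymmetric vectors are realized by any representative $(t,n)$ or $(n,t)$ with $n>t$, taking $n=\infty$ at the terminal date. The characterization of pure optimins as the profiles whose vectors lie on $\operatorname{PF}(\mathcal P_N)$ is then just the definition of pure optimin. The $N=\infty$ case is identical, with $k\in\mathbb N$ and $W_\infty$ in place of $W_N$.
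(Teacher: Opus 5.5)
Your proposal is correct and follows essentially the same route as the paper: the same three-way case split, with the key step being that the early player's profitable deviations to dates after $m$ cannot push the later player below $\ell_m$. Your case split on $X_s$ versus $X_m$ just unpacks the paper's identity $\ell_s=(h_s+4A_s)/5$. For the deviation to $n$, admissibility already gives $m_n>h_m>\ell_m$ directly, which would spare you the extra inequality (with $m_n$ read as $W_N$ when $n=\infty$).
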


\begin{proof}
At $(t,t)$, an opponent who stops later receives only $\ell_t<m_t$ and therefore does not profit. An earlier pure stop $k<t$ is profitable exactly when $h_k>m_t$ and leaves the nondeviator with $\ell_k$. Together with obedience, this gives
\[
\pi^P(t,t)=(a_t,a_t).
\]

Now take $(m,n)$ with $m<n$. Consider first the performance of the early player. The later player initially receives $\ell_m$. Matching the early stop at $m$ is strictly profitable and leaves the early player with $m_m$. An earlier stop $k<m$ is profitable exactly when $h_k>\ell_m$ and leaves the early player with $\ell_k$. Any later pure deviation by the later player leaves the game already terminated at $m$ and hence cannot improve that player's payoff. Therefore the early player's performance is $b_m$, independently of $n$.

For the later player's performance, the early player initially receives $h_m$. A deviation by the early player to $k<m$ is profitable exactly when $h_k>h_m$ and leaves the later player with $\ell_k$, generating the term in $c_m$. It remains to show that no profitable deviation by the early player to a date $k>m$ can lower the later player's payoff below $\ell_m$.

If $m<k<n$ and the deviation is profitable, then $h_k>h_m$. Since $A_k\ge A_m$ and
\[
\ell_s=\frac{h_s+4A_s}{5},
\]
it follows that
\[
\ell_k
=\frac{h_k+4A_k}{5}
>\frac{h_m+4A_m}{5}
=\ell_m.
\]
If the early player deviates to $n$, profitability requires $m_n>h_m$, and the later player then receives $m_n>h_m>\ell_m$. If the early player deviates to a finite date $k>n$, the early player receives $\ell_n$; if that deviation is profitable, then $\ell_n>h_m>\ell_m$, while the later player receives $h_n>\ell_n>\ell_m$. Finally, if the deviation is to never stopping, profitability likewise requires the deviator's resulting payoff to exceed $h_m$, and the later player's payoff cannot fall below $\ell_m$. Thus only deviations to dates $k<m$ can lower the later player's payoff below obedience, and its performance is exactly $c_m$. This proves \eqref{eq:allperf}; the transpose follows by symmetry.

Hence every pure profile generates one of the vectors in $\mathcal P_N$, and every vector in $\mathcal P_N$ is generated by a pure profile. The characterization of pure optimins by the Pareto frontier follows directly from the definition.
\end{proof}

\begin{proposition}[Existence of a pure optimin]
\label{prop:existence}
For every finite horizon, a pure optimin exists. For the infinite-horizon special specification \eqref{eq:special}, a pure optimin also exists.
\end{proposition}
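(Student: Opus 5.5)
Existence reduces to showing that the Pareto frontier of the set of performance vectors is nonempty. By Proposition~\ref{prop:optimin}, that set is exactly $\mathcal P_N$. A profile is a pure optimin precisely when its performance vector lies on $\operatorname{PF}(\mathcal P_N)$, so it suffices to find one Pareto-undominated element of $\mathcal P_N$.

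For finite $N$, $\mathcal P_N$ is a finite set with at most $3N+1$ vectors. Any finite subset of $\mathbb R^2$ has a Pareto-undominated element. For instance, take a vector maximizing $\pi_1+\pi_2$. Any vector dominating it would have a strictly larger sum, which is impossible. Its generating profile, which exists by Proposition~\ref{prop:optimin}, is therefore a pure optimin. This case is immediate.

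For $N=\infty$, the set $\mathcal P_\infty$ is countably infinite, so we need boundedness together with attainment of a maximizer. I would again maximize $\pi_1+\pi_2$ over $\mathcal P_\infty$ and show that the supremum is attained. The plan rests on two bounds.

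\emph{Upper bound.} Every coordinate of every vector in $\mathcal P_\infty$ is at most $\max\{m_t,\ell_t,W_\infty\}$ for the relevant $t$. Under \eqref{eq:special}, $X_t=\delta^{t-1}e^{\sqrt t}\to0$ because $\delta^{t}$ decays faster than $e^{\sqrt t}$ grows. Also $A_t\uparrow W_\infty$. Hence $m_t,\ell_t,h_t$ all converge to $W_\infty$, and the whole set is bounded.

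\emph{Eventual smallness.} The key observation is that $a_t$, $b_t$, $c_t$ are each bounded by an infimum of $\ell_k$ over some earlier $k<t$ with $h_k$ exceeding the relevant threshold. Fix $k_0$ with $h_{k_0}>W_\infty$. Such a $k_0$ exists if $5X_{k_0}>W_\infty-A_{k_0}$. I would check this at $k_0=1$: there $A_1=0$ and $h_1=5e$, so one needs $5e>1/(1-\delta)^2$. This holds only for small $\delta$, so in general I would instead use the sequence $h_k$ itself.

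Since $h_k\to W_\infty$, one can argue as follows. If $\sup_k h_k>W_\infty$, it is attained at some finite $k^*$. Then for all large $t$, the thresholds $m_t,\ell_t,h_t$ lie below $h_{k^*}$, because they converge to $W_\infty$. The term $\ell_{k^*}$ then enters every infimum, so $a_t,b_t,c_t\le\ell_{k^*}$ for all large $t$. In the same way, $a_\infty\le\ell_{k^*}$.

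Consequently $\pi_1+\pi_2$ at late dates is at most $2\ell_{k^*}$. The vector $(a_{k^*},a_{k^*})$, or some early vector, should do at least as well. At worst, only finitely many dates can exceed $2\ell_{k^*}$ in sum, so the supremum is a maximum over a finite set.

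In the complementary case $h_k\le W_\infty$ for all $k$, the vector $(a_\infty,a_\infty)$ equals $(W_\infty,W_\infty)$. Every coordinate of any vector is at most $\sup\{m_t,\ell_t,W_\infty\}\le W_\infty$, since $m_t\le h_t\le W_\infty$. So $(W_\infty,W_\infty)$ is undominated and never stopping is a pure optimin.

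The main obstacle is the first case. I must make rigorous that the maximum of the sum is attained, that is, that late-date sums are dominated by some early vector. The fallback is to show directly that the sum over late dates is at most $2\ell_{k^*}<\max\{2a_{k^*},\,b_{k^*}+c_{k^*}\}$, possibly by comparing with $(a_{k^*+1},a_{k^*+1})$ instead. If the exact comparison fails, I would use the weaker fact that the finitely many early vectors together with the bound $2\ell_{k^*}$ still yield a maximizer. This requires that some vector with sum at least $2\ell_{k^*}$ exists, which the limit $(a_t,a_t)\to\ell_{k^*}$-type values should provide.
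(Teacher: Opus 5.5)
Your finite-horizon argument is correct and matches the paper's. There are finitely many pure profiles, so a maximizer of $\pi_1+\pi_2$ has a Pareto-undominated performance vector. The infinite-horizon argument, however, has a genuine gap at the step you flag yourself.

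\textbf{The failing step.} Bounding every late sum by $2\ell_{k^*}$ reduces the problem to finitely many dates only if some element of $\mathcal P_\infty$ has coordinate sum at least the supremum of the late sums. Nothing guarantees that. $2\ell_{k^*}$ is just a bound built from one term entering the infima, and no vector need reach it. It fails badly for $\delta=0.8$:
\begin{itemize}
\item $h_k$ peaks at $k^*=12$, so $2\ell_{12}\approx 41.7$.
\item For $t\ge 6$, the index $k=5$ enters the infimum defining $b_t$, since $h_5\approx 25.7>\ell_t$. Hence $b_t\le\ell_5\approx10.4$.
\item For all $t$, $c_t\le\ell_t<25.2$ and $a_t\le\ell_6\approx12.4$; at $t\le5$ the sums are in any case below $25$.
\end{itemize}
So every element of $\mathcal P_\infty$ has coordinate sum below $36$. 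Consequently:
\begin{itemize}
\item Neither $(a_{k^*},a_{k^*})$, nor $(a_{k^*+1},a_{k^*+1})$, nor any other vector does ``at least as well'' as $2\ell_{k^*}$.
\item The remark that only finitely many dates can exceed $2\ell_{k^*}$ yields nothing, since no date does.
\item The hoped-for limit $a_t\to\ell_{k^*}$ is false here: $a_t$ is eventually equal to $\ell_5\approx10.4$. A limit would not give attainment anyway.
\end{itemize}
Your second case is harmless but vacuous. Indeed, $h_k>W_\infty$ iff $5e^{\sqrt k}>k/(1-\delta)+\delta/(1-\delta)^2$, which holds for all large $k$.

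\textbf{The repair.} Use \emph{every} $k$ with $h_k>W_\infty$, not only the maximizer $k^*$, and compare with the never-stopping vector. Each such $k$ eventually enters all three infima, and $m_t,\ell_t\to W_\infty$. Therefore
\[
\limsup_{t\to\infty}\max\{a_t,b_t,c_t\}\le\min\Bigl\{W_\infty,\inf_{h_k>W_\infty}\ell_k\Bigr\}=a_\infty .
\]
Now suppose the supremum of $\pi_1+\pi_2$ over $\mathcal P_\infty$ is not attained at any finite date. Then it is a limit of sums at dates tending to infinity, so it is at most $2a_\infty$. It is therefore attained by $(a_\infty,a_\infty)$, the performance of $(\infty,\infty)$.

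This is exactly upper semicontinuity of performance at $\infty$, which is how the paper argues, without using the explicit formulas of Proposition~\ref{prop:optimin}:
\begin{itemize}
\item It compactifies $\mathbb N\cup\{\infty\}$ via $d(m,n)=|1/m-1/n|$.
\item It notes that stopping payoffs are continuous, because $A_t\to W_\infty$ and $X_t\to0$.
\item It shows each $\pi_i^P$ is upper semicontinuous: obedience stays admissible, and a strictly profitable deviation stays strictly profitable at nearby profiles.
\item It concludes that the upper semicontinuous sum $\pi_1^P+\pi_2^P$ attains its maximum on the compact space.
\end{itemize}
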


\begin{proof}
For finite $N$, the pure-strategy space is finite, so there are finitely many pure-profile performance vectors and their Pareto frontier is nonempty.

For $N=\infty$, equip $S^P:=\mathbb N\cup\{\infty\}$ with the topology in which $t\to\infty$; equivalently, use the metric
\[
d(m,n)=\left|\frac1m-\frac1n\right|,
\qquad
\frac1\infty:=0.
\]
Then $S^P$ is compact. Since
\[
A_t\longrightarrow W_\infty,
\qquad
X_t=\delta^{t-1}e^{\sqrt t}\longrightarrow0,
\]
the pure stopping-game payoff function is continuous on $S^P\times S^P$.

Each performance component $\pi_i^P$ is upper semicontinuous. Let $s^n\to s$ and fix $\varepsilon>0$. Choose an admissible pure response at $s$ whose payoff to the evaluated player is within $\varepsilon$ of $\pi_i^P(s)$. If that response is obedience, use obedience at $s^n$. If it is a strictly profitable pure deviation, continuity preserves strict profitability for all sufficiently large $n$. Continuity of payoffs then gives
\[
\limsup_{n\to\infty}\pi_i^P(s^n)
\le \pi_i^P(s)+\varepsilon.
\]
Letting $\varepsilon\downarrow0$ proves upper semicontinuity. Therefore $\pi_1^P+\pi_2^P$ attains a maximum on the compact pure-strategy space. Any maximizer is Pareto-undominated in pure-strategy performance and hence is a pure optimin.
\end{proof}

\subsection{Comparison and examples}

Immediate stopping $(1,1)$ has pure-strategy performance $(2e,2e)$. It is a pure optimin for sufficiently impatient players, but need not remain one as patience increases. Direct evaluation of \eqref{eq:at}--\eqref{eq:PN} in the infinite-horizon game gives the following symmetric pure optimins; asymmetric pure optimins may coexist.

\begin{center}
\begin{tabular}{@{}c c c c@{}}
\toprule
$\delta$ & Every NE / SPNE outcome & Symmetric pure-optimin times & Common performance\\
\midrule
$0.20$ & liquidate at $t=1$ & $t=1$ & $5.44$\\
$0.80$ & liquidate at $t=1$ & $t=19,\ldots,24$ & $12.41$\\
$0.99$ & liquidate at $t=1$ & $t=2457,\ldots,2493$ & $2.572\times10^{10}$\\
\bottomrule
\end{tabular}
\end{center}

Thus patience never changes the Nash or SPNE recommendation under \eqref{eq:special}, but it can move pure-optimin liquidation thousands of periods into the future.

A useful high-patience benchmark comes from maximizing the discounted stock term
\[
X_t=\delta^{t-1}e^{\sqrt t}.
\]
Treating $t$ as continuous,
\[
\log X_t=(t-1)\log\delta+\sqrt t,
\]
so the unconstrained continuous maximizer is
\begin{equation}
t_X^{\mathrm{cont}}
=\frac{1}{4(-\log\delta)^2}
\sim\frac{1}{4(1-\delta)^2}
\qquad(\delta\uparrow1).
\label{eq:tx}
\end{equation}
For the constrained domain $t\ge1$, the continuous maximizer is $\max\{1,t_X^{\mathrm{cont}}\}$, and the integer maximizer is attained at a neighboring integer. For $\delta=0.99$, \eqref{eq:tx} gives approximately $2475$, close to the exact symmetric pure-optimin interval above. Equation \eqref{eq:tx} is a high-patience scale for delayed liquidation.

For a finite horizon, the same pure-optimin formulas apply after restricting stopping times to the available dates and replacing $W_\infty$ by $W_N$. The horizon can truncate or otherwise alter the pure-optimin frontier, but under the equilibrium conditions of Theorem~\ref{thm:general} it never changes the Nash/SPNE conclusion: every active subgame recommends $D$ immediately.

\clearpage
\section{Repetition Does Not Change the Game-Theoretic Value of Chess}

\begin{sectionoverview}
Under the threefold repetition rule in chess, a player may claim a draw when the same position occurs for the third time, and the game is drawn automatically on the fifth occurrence. We ask whether these repetition rules can affect the game-theoretic win/draw/loss outcome of chess. Keeping all other rules unchanged, including the fifty-move and seventy-five-move rules, we show that any player who can force a win can do so without repeating the same chesss position. Thus, chess has the same optimal value whether the repetition draws are eliminated or not.
\end{sectionoverview}

\subsection{Introduction}
Under the present International Chess Federation (FIDE) rules, a player may claim a draw when the same position occurs for the third time, and the game is drawn automatically on the fifth occurrence \citep[Articles.~9.2, 9.6]{FIDE2023}. We ask whether these repetition rules can affect which player can force a win under perfect play. 

We show that they cannot. More strongly, suppose that White or Black can force a win when the repetition rules are removed but all other chess rules are unchanged. That player has a winning strategy in which no FIDE position occurs twice. It implies that the theoretical win/draw/loss value is unchanged if the current repetition rules are removed or strengthened so that the second occurrence is already an automatic draw.

Kalm\'ar's classical result at first seems to settle the question. \citet[p.~79]{kalmar1929} proved that in an abstract two-person game of perfect information, a player who can force a win can do so ``without repetition'' of a position. \citet{schwalbe2001} trace this result from Zermelo's original argument through K\"onig's criticism. The important qualification is that Kalm\'ar's \emph{position} must include everything that can affect future play; identical positions must have identical continuation games \citep[p.~69]{kalmar1929}.

\citet{ewerhart2022} first recognized this state-space issue explicitly for chess. First, he showed that the value of the then-official, potentially infinite version of chess was the same as that of a finite version that declared a draw on the third occurrence of a position \citep[pp.~211--213]{ewerhart2022}. Second, and more relevant for our purpose, he introduced a ``z-position'' that records the current chess position together with the information needed for repetition and the fifty-move rule, so equal z-positions have the same continuation game \cite[pp.~213--214]{ewerhart2022}. This distinction is central to our result.

To see the distinction, let $P$ denote the FIDE position used to determine repetition. It records the placement of the pieces, the player to move, and the castling and en-passant rights relevant to the legal moves. Let $h$ be the number of individual moves since the last pawn move or capture. Then $(P,h)$ distinguishes game states, whereas the repetition rule compares only $P$. Thus, $(P,20)\ne(P,24)$ as game states, even though they are the same position for purposes of repetition. Kalm\'ar's theorem can rule out repetition of the complete state; by itself, it does not rule out a second occurrence of $P$ with a different value of $h$.

The main result takes into account this difference. For every position $P$ from which a player can force a win, consider the largest fifty-move count at which $P$ remains winning. In any winning line with no pawn move or capture, this largest winning count must strictly increase from one position to the next. But any sequence that returns to the same chess position can contain neither a capture nor a pawn move. Thus, this number must be strictly larger than itself, which is impossible.

\subsection{Results}

A \emph{position} $P$ consists of the placement of the pieces, the player to move, and the castling and en-passant information. This is the position used to determine whether the same position has occurred before under FIDE Article 9.2.3 \cite{FIDE2023}. In particular, the number of moves since the last pawn move or capture is not part of $P$.

Let $h$ be the number of individual moves (i.e., plies) since the last pawn move or capture. Under the seventy-five-move rule, a nonterminal state has $0\le h\le149$; if 150 such moves are completed without a pawn move or capture, the game is drawn automatically unless the last move is checkmate \citep[Art.~9.6.2]{FIDE2023}. Under the fifty-move rule, a draw is not automatic but a claim is possible. 

Let $G^0$ denote the standard chess without repetition rules. Let $G^2$ denote the same game except that the second occurrence of a position $P$ is an automatic draw. We write its value $V(G)$ as one of win, draw, or loss from White's perspective. We first note the following observation.

\begin{lemma}\label{lem:mono}
Fix a player $A$. If $A$ can force a win from $(P,h)$ in $G^0$, then $A$ can force a win from $(P,h')$ for every $h'\le h$.
\end{lemma}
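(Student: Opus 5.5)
We prove the lemma by a strategy-copying (simulation) argument. Fix a winning strategy $\sigma$ for $A$ from $(P,h)$ in $G^0$, and define a strategy $\sigma'$ from $(P,h')$ with $h'\le h$. The play from $(P,h')$ is tracked by a shadow play from $(P,h)$ with the same sequence of chess positions. At every moment the two states share the same position and differ only in the counter, with the shadow counter at least the actual one. $A$ plays in the real game whatever $\sigma$ prescribes in the shadow game, and the opponent's real moves are copied into the shadow game.

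\textbf{Step 1: legality and position transitions do not depend on $h$.} Whether a move is legal at a state, and which position it produces, depends only on $P$ (placement, side to move, castling and en-passant rights). The counter only enters through the fifty-move and seventy-five-move rules, and only at termination. So every move available in one game is available in the other, and the two plays pass through the same positions. The counters evolve in lockstep: both reset to $0$ after a pawn move or capture, and both increase by one otherwise. Hence the gap is preserved until a reset and is zero afterwards, so the actual counter never exceeds the shadow counter.

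\textbf{Step 2: terminal outcomes are at least as good in the real play.} Checkmate and stalemate, and the other position-based terminations (dead position), depend only on $P$ and therefore coincide in both plays. The seventy-five-move automatic draw occurs once the counter reaches $150$ (unless the last move is mate). Since the real counter is at most the shadow counter, the real play reaches $150$ no earlier than the shadow play. If the shadow play is still ongoing, the real play is still ongoing.

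The fifty-move claim is a player option. In $G^0$ it is modelled as an available action at counter $\ge100$. The opponent can claim in the real game only when the real counter is at least $100$, which forces the shadow counter to be at least $100$ as well. So the opponent's claim is also available in the shadow game and is copied there, where $\sigma$ wins against every opponent behaviour. Such a claim therefore cannot occur on a $\sigma$-consistent path, because it would end the shadow game in a draw. $A$ never needs to claim.

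Since every shadow play consistent with $\sigma$ ends in a win for $A$ after finitely many moves, the real play ends in the same checkmate at the same time. It cannot end earlier in a draw, because real draw conditions imply the shadow ones.

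\textbf{Main obstacle.} The delicate point is precisely the modelling of the fifty-move claim and of timing near the counter thresholds, including the exception "unless the last move is checkmate." I would handle it by stating explicitly that every counter-dependent terminal condition or option is monotone: available at $h$ implies available at any larger counter with the same position. The coupling above then transfers $\sigma$'s guarantee. The remaining checks are routine once this monotonicity is recorded.
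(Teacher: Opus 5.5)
Your proof is correct and follows the paper's argument. Both reuse the winning strategy from $(P,h)$ and use three facts: legal moves and position-based terminations do not depend on the counter; the counter started at $h'$ stays at or below the one started at $h$ until a pawn move or capture, after which the two coincide; and so neither a fifty-move claim nor the seventy-five-move draw can arise earlier. Your shadow-play coupling and the explicit monotonicity of the counter-dependent options, including the checkmate exception, simply spell out what the paper states in a few lines.
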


\begin{proof}
Consider the winning strategy from $(P,h)$. Until the next pawn move or capture, the actual count beginning at $h'$ is always less than $h$. Thus a fifty-move claim or the automatic seventy-five-move draw cannot arise earlier. After a pawn move or capture, both counts return to zero and the continuations coincide. Because changing $h$ does not change the legal moves, the same strategy wins from $(P,h')$.
\end{proof}

For every position $P$ that is winning for $A$ at some value of $h$, define
\[
 H(P)=\max\{h:(P,h)\text{ is winning for }A\text{ in }G^0\}.
\]
The maximum exists because there are only finitely many nonterminal values of $h$. By Lemma~\ref{lem:mono}, if $P$ is winning at $H(P)$, it is winning at every $h\leq H(P)$.

\begin{theorem}\label{thm:main}
If a player $A$ can force a win in $G^0$ from $(P_0,h_0)$, then $A$ has a winning strategy under which no FIDE position $P$ occurs twice.
\end{theorem}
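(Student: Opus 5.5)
The plan is to build a strategy that always keeps the pair $(P,h)$ winning for $A$ and that decreases a finite potential at every one of $A$'s moves. The potential is the standard distance-to-mate. In $G^0$ the states $(P,h)$ with $0\le h\le 149$ form a finite game, because the seventy-five-move rule caps $h$. Ordinary retrograde analysis therefore gives each state winning for $A$ a finite rank $d(P,h)$, the number of plies to forced mate under optimal defence.

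From each winning state, $A$ plays a move that lowers the rank. Against any reply the rank then strictly decreases along the play, so the game ends in a win for $A$ within finitely many plies. The fifty-move claim is not a problem: claims are made only by the side claiming, and by Lemma~\ref{lem:mono} together with the definition of winning in $G^0$, a winning state already accounts for the opponent's option to claim.

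This alone does not rule out repeated FIDE positions, since $(P,h)$ and $(P,h')$ with $h\ne h'$ have different ranks. To handle this, I would refine the strategy using $H$. From a position $P$, $A$ plays the rank-minimizing winning move computed at the state $(P,H(P))$, not at the actual $h$. By Lemma~\ref{lem:mono}, a strategy winning from $(P,H(P))$ also wins from any $(P,h)$ with $h\le H(P)$. So I would define the strategy as a function of $P$ alone: at $P$, play the first move of an optimal winning line from $(P,H(P))$.

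The key step is to prove the invariant that along play, every reached winning position $P_t$ satisfies $h_t\le H(P_t)$. This follows by induction with Lemma~\ref{lem:mono}. Suppose a move or reply $P\to P'$ involves no pawn move and no capture. Starting from $(P,H(P))$, the winning move reaches $(P',H(P)+1)$, which is still winning. Hence $H(P')\ge H(P)+1$. Opponent replies preserve winningness from any winning state, so the same inequality holds for them.

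Now suppose $P$ occurs twice. Between the two occurrences there can be no pawn move and no capture, since these are irreversible: pawns never return to a square, material never regrows, and castling and en-passant rights only shrink. So $H$ strictly increases around the cycle, giving $H(P)>H(P)$, a contradiction. Any irreversible move resets $h$, so the argument restarts.

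The main obstacle is to make the phrase ``$H$ strictly increases along every winning move or reply'' rigorous while playing a fixed, history-independent strategy. The fix is to define the strategy positionally from $(P,H(P))$, as above. Then every successor state is reached from a state with count $H(P)$, which is winning, so its count $H(P)+1$ is a winning count. This forces $H(P')\ge H(P)+1$. Combined with the invariant $h\le H$, the strategy wins from every state it reaches, and termination follows from finiteness.
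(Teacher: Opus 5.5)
Your proposal is correct and takes essentially the same route as the paper. Both arguments use the positional strategy that plays, at each $P$, a winning move for the state $(P,H(P))$, the invariant $h\le H(P)$ via Lemma~\ref{lem:mono}, the bound $H(Q)\ge H(P)+1$ along every non-resetting move by either side, and the contradiction $H(P_0)<\cdots<H(P_0)$ around any cycle, which cannot contain a pawn move or capture. The distance-to-mate ranking is unnecessary: in the refined strategy, termination comes from finiteness (no repeated position among finitely many positions), not from a decreasing rank.
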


\begin{proof}
Assume that $A$ can force a win in $G^0$ from $(P_0,h_0)$. We next construct a non-repeating winning strategy. Consider any position $P$ at which $A$ is to move.  Whenever $P$ is reached, make the move that preserves the win for $A$ at state $(P,H(P))$. We first show that every nonterminal state $(P,h)$ reached under this strategy is winning for $A$ and satisfies $h\le H(P).$

Obviously, this holds at the start. Suppose it holds at some position $P$. If $A$ is to move, the chosen move is winning when the count is $H(P)$; if the opponent is to move, every move from the winning state $(P,H(P))$ must leave another state winning for $A$, or the opponent could avoid losing. The same applies to any draw claim: if the opponent could claim a draw, $(P,H(P))$ would not be winning for $A$.

If the next move is a pawn move or capture, the count returns to zero. Otherwise, it increases to $H(P)+1$. The actual count is no larger, so by Lemma~\ref{lem:mono}, the resulting state is still winning for $A$. Thus the property is preserved after every move.

Now consider any move from $P$ to $Q$, consistent with the strategy, that is neither a pawn move nor a capture. By the preceding argument, $(Q,H(P)+1)$ is winning for $A$. Hence
\begin{equation}\label{eq:increase}
H(Q)\ge H(P)+1>H(P).
\end{equation}
This holds whether the move is made by $A$ or the opponent.

To reach a contradiction, suppose that some position repeats:
\[
P_0, P_1, \cdots, P_k=P_0.
\]
No move in this sequence can be a capture or a pawn move, because any such move  is irreversible. Applying \eqref{eq:increase} gives
\[
H(P_0)<H(P_1)<\cdots<H(P_k)=H(P_0),
\]
a contradiction. Therefore no position occurs twice.

Finally, infinite play is impossible under the constructed strategy because there are finitely many FIDE positions and we ruled out repetition. Therefore, $A$ is guaranteed to win under the constructed strategy, 
\end{proof}

\begin{corollary}[Repetition rules do not change the value]\label{cor:value} $
V(G^0)=V(G^2)$. 
\end{corollary}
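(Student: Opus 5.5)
The plan is to show that a forced win in either game transfers to the other; since each game is a finite or finitely-bounded win/draw/loss game, this pins down equal values. Both games are determined: $G^2$ is finite outright, and $G^0$ has finitely many states $(P,h)$, while repetition without irreversible moves is eventually cut off by the seventy-five-move rule. So $V(G^0)$ and $V(G^2)$ are well defined, and it suffices to prove the following: for each player $A$, $A$ can force a win in $G^0$ from the initial state if and only if $A$ can force a win in $G^2$. If neither player can force a win in one game, then by this equivalence neither can in the other, and both values are draw.

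For the direction $G^0\Rightarrow G^2$, I would invoke Theorem~\ref{thm:main}. It gives $A$ a winning strategy in $G^0$ under which no FIDE position occurs twice, whatever the opponent does. I will use the same strategy in $G^2$. Along any play consistent with it, no position repeats, so the second-occurrence rule of $G^2$ is never triggered. Every other rule is identical in the two games, including the fifty-move claims and the seventy-five-move draw. Hence the play is literally the same sequence of legal moves with the same outcome, and $A$ wins in $G^2$. One detail to check is that the opponent's draw claims in $G^2$ coincide with those in $G^0$ along such plays. They do, because the only extra draw in $G^2$ is the repetition draw, which does not arise.

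For the direction $G^2\Rightarrow G^0$, I would argue as follows. Suppose $A$ wins $G^2$ but not $G^0$. By determinacy of $G^0$, the opponent $B$ has a strategy in $G^0$ guaranteeing at least a draw. Because the roles of the players are symmetric, I would rather argue contrapositively through the first direction applied to $B$. If $B$ can force a win in $G^0$, then by the first direction $B$ wins $G^2$, which contradicts $A$ winning $G^2$. So the remaining case is that $G^0$ is a draw under perfect play. In that case I need to show $A$ cannot win $G^2$, which means $B$ can secure a draw in $G^2$. My construction is this. $B$ plays a drawing (non-losing) strategy of $G^0$ until the first repetition of some position. At that point $G^2$ declares a draw automatically, which is fine for $B$. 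Before any repetition, the play is a legal $G^0$ play consistent with a non-losing strategy, so it cannot have ended in a win for $A$. The game must therefore end either by a $G^0$ draw rule or by the repetition draw, and $B$ does not lose. Hence $A$ does not win $G^2$, which is the required contradiction. The draw/draw case then follows from the two directions.

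I expect the main obstacle to be stating determinacy of $G^0$ cleanly. Claims under the fifty-move rule are optional, and without repetition rules play could in principle be long. The seventy-five-move automatic draw bounds $h$, however, and together with finitely many positions this makes the state space finite. Zermelo/Kalmár determinacy then applies to the state game, and with that in place the rest is bookkeeping.
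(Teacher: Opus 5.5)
Your proof is correct, but the converse direction takes a different route from the paper's.

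The direction from $G^0$ to $G^2$ is the same as the paper's. The non-repeating winning strategy from Theorem~\ref{thm:main} never triggers the second-occurrence rule, so it wins unchanged in $G^2$.

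For the converse, the paper transfers $A$'s winning strategy from $G^2$ to $G^0$ directly. A repetition in $G^2$ is a draw, so no play consistent with a $G^2$-winning strategy contains one. Every $G^0$ play against that strategy therefore coincides with a $G^2$ play and is won by $A$.

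You argue instead through the contrapositive, which is dual to the paper's argument. You use determinacy of $G^0$ to give the opponent a non-losing strategy. You then transfer that strategy from $G^0$ to $G^2$, using the fact that $G^2$ only truncates plays with additional draws.

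Both arguments rest on the same observation: $G^2$ is $G^0$ with extra draw terminations. The paper's version is shorter and needs no determinacy for this direction; determinacy enters only when it declares the remaining positions draws. Yours makes explicit the monotonicity principle that adding draw terminations cannot hurt the side trying to avoid defeat.

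Your case split is unnecessary. A winning strategy for $B$ in $G^0$ is in particular non-losing, so the same truncation argument covers both cases at once. You do not need to route the case where $B$ wins $G^0$ through the first direction.
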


\begin{proof}
Suppose $A$ can force a win in $G^0$. By Theorem~\ref{thm:main}, $A$ can do so without repeating any position. Therefore a twofold, threefold, or fivefold repetition rule is never invoked, and the same strategy wins under each rule.

Conversely, suppose $A$ can force a win in $V(G^2)$. A winning strategy cannot allow a repetition draw. Thus, removing the repetition draw does not change the winning strategy. As a result, the winning positions are the same under all these rules. The remaining positions are draws, so the win/draw/loss value is unchanged.
\end{proof}

\begin{remark}[Arbitrary repetition and move-count thresholds]
The particular numbers three, five, fifty, and seventy-five are not essential. Fix any finite $X$-move rule of the same form, in which a pawn move or capture resets the count, and let $G_X^{n}$ denote the corresponding game in which the $n$th occurrence of the same chess position is a draw. Then, for every fixed $X$ and every $n\ge2$, $V(G_X^{(0)})=V(G_X^{n})$.
The proof remains the same: $H(P)$ is defined for the chosen $X$, and there is a winning strategy without a second repetition. Thus, no numerical relation between $X$ and $n$ is needed for the result. Note however that this does not imply that changing $X$ itself leaves the value of chess unchanged.\footnote{More generally, the value-equivalence result holds for any finite two-player deterministic perfect-information game with win/draw/loss outcomes in which the complete state is $(p,c)$, actions and their base successors are independent of the bounded counter $c$, every continuing action either increases $c$ or resets it to zero, and no closed path of base positions contains a resetting action. If repetition is defined on the base position $p$, then, for every $c$ and every $n\ge2$, adding or removing a draw on the $n$th occurrence of $p$ does not affect the value.}
\end{remark}

\clearpage
\phantomsection

\end{document}